\newtheorem{pro}{Proposition}
\newtheorem{lemma}[pro]{Lemma}
\newtheorem{theorem}[pro]{Theorem}
\newtheorem{claim}[pro]{Claim}
\newtheorem{prerem}[pro]{Remark}
\newtheorem{predefin}[pro]{Definition}
\newenvironment{probc}[3]{\vskip 5pt
	\noindent {\bf #1}\\[3pt]
	\hspace*{-6pt}\begin{tabular}{p{60pt}l}
		{INSTANCE:}& \parbox[t]{12cm}{#2}\\[3pt]
		QUERY:    & \parbox[t]{12cm}{#3}}{\end{tabular}\\[4pt]}
\renewcommand{\qedsymbol}{$\blacksquare$}
\DeclareMathOperator{\MAM}{\alpha^\prime_{ac}}
\DeclareMathOperator{\dom}{\mathscr{D}}
\DeclareMathOperator{\range}{\mathscr{R}}
\DeclareMathOperator{\tw}{tw}
\begin{document}

\title{\vspace*{2cm}	On the Parameterized Complexity\\ of the Acyclic Matching Problem 
}
\author{
		Sahab Hajebi%
	\thanks{Department of Mathematics,
		Isfahan University of Technology,
		P.O. Box: 84156-83111, Isfahan, Iran. Email Address: \href{mailto:sahab.hajebi@gmail.com}{sahab.hajebi@gmail.com}.} 
	\and
		Ramin Javadi%
		\thanks{Corresponding author, Department of Mathematical Sciences,
			Isfahan University of Technology,
			P.O. Box: 84156-83111, Isfahan, Iran. School of Mathematics, Institute for Research in Fundamental Sciences (IPM), P.O. Box: 19395-5746,
			Tehran, Iran.  Email Address: \href{mailto:rjavadi@iut.ac.ir}{rjavadi@iut.ac.ir}.}
		\thanks{This research was in part supported by a grant from IPM (No. 1400050420).}
}
\date{}
\maketitle
\begin{abstract}
A matching is a set of edges in a graph with no common endpoint. A matching $M$ is called acyclic if the induced subgraph on the endpoints of the edges in $M$ is acyclic. Given a graph $G$ and an integer $k$, Acyclic Matching Problem seeks for an acyclic matching of size $k$ in $G$. The problem is known to be NP-complete. In this paper, we investigate the complexity of the problem in different aspects.  First, we prove that the problem remains NP-complete for the class of planar bipartite graphs of maximum degree three and arbitrarily large girth. Also, the problem remains NP-complete for the class of planar line graphs with maximum degree four. Moreover, we study the parameterized complexity of the problem. In particular, we prove that the problem is W[1]-hard on bipartite graphs with respect to the parameter $k$. 
On the other hand, the problem is fixed parameter tractable with respect to the parameters $tw$ and $(k,c_4)$, where $tw$ and $c_4$ are the treewidth and the number of cycles with length $4$ of the input graph. We also prove that the problem is fixed parameter tractable with respect to the parameter $k$ for the line graphs and every proper minor-closed class of graphs (including planar graphs).
\\
\begin{itemize}
\item[]{{\footnotesize {\bf Keywords:}\ Acyclic Matching, Induced Matching, Computational Complexity, Parameterized Complexity.}}
\item[]{ {\footnotesize {\bf Subject classification:} 05C70, 05D15.}}
\end{itemize}
\end{abstract}
	\section{Introduction}
Throughout the paper, all graphs are simple, undirected and loopless. Given a graph $G=(V,E)$, a subset of pairwise nonadjacent edges $M\subseteq E$ is called a \textit{matching} of $G$. 
A matching $M$ of $G$ is called an \textit{acyclic matching} if  $G[M]$ is acyclic, where $G[M]$ is the subgraph of $G$ induced on the all endpoints of $M$. Also, a matching $M$ of $G$ is called an \textit{induced matching} if $G[M]$ contains no edges except the edges in $M$.
It is clear that every induced matching is an acyclic matching. 
In the \textsc{Acyclic (Induced) Matching} problem, the input is a graph $G$ and a positive integer $k$ and the task is to find an acyclic (induced) matching of size $k$ in $G$.

The \textsc{Induced Matching} problem was first introduced by Stockmeyer and Vazirani \cite{vazir} as a variant of the maximum matching problem and proved to be NP-complete in general graphs. This problem is known to be NP-complete for planar graphs of maximum degree $4$ \cite{pd4IM}, bipartite graphs of maximum degree three, $C_4$-free bipartite graphs \cite{BiIM}, $r-$regular graphs for $r \geq 5$, line-graphs, chair-free graphs, and Hamiltonian graphs \cite{LinIM}.  The problem is known to be polynomial time solvable for many classes of graphs such as trees \cite{Zito}, chordal graphs  \cite{cameron} and line-graphs of Hamiltonian graphs \cite{LinIM} (for more information see e.g. \cite{weekchor,Arc,trapez,ATfree,p5free}). On the parameterized complexity of the problem, it was shown that the problem is W[1]-hard with respect to the parameter $k$ on bipartite graphs \cite{PIM}, on $K_{1,4}$-free graphs \cite{clawfree}, and also on Hamiltonian bipartite graphs \cite{HamW1IM}. For the latter result, it was shown that the problem cannot be solved in time $n^{o(\sqrt{k})}$, where $n$ is the number of vertices of the input graph, unless the 3-SAT problem can be solved in subexponential time. On the other hand, it becomes fixed-parameter tractable with respect to the parameter $k$ on planar graphs, graphs with girth at least 6, line graphs, graphs with bounded treewidth \cite{PIM} and also on claw-free graphs \cite{clawfree}.

The \textsc{Acyclic Matching} problem first introduced by Goddard et al. \cite{1} and is proved to be NP-complete in general graphs.  
In this paper, we focus on the complexity of the  \textsc{Acyclic Matching} problem and for the first time investigate the parameterized complexity of the problem with respect to some known parameters such as $k$, $\Delta$ and treewidth and on some subclasses of graphs such as bipartite graphs, line graphs and proper minor-closed graphs. First, we survey most known results about the problem.

In \cite{panda}, Panda and Pradhan proved that the \textsc{Acyclic Matching} problem is NP-complete for planar bipartite graphs and Perfect elimination bipartite graphs.  Here, we will improve this result and prove that the problem remains NP-complete for planar perfect elimination bipartite graphs with maximum degree three and girth at least $l$, for every fixed integer $l\geq 3$ (see Theorem~\ref{theorem:1}).
In \cite{chaud}, Panda and Chaudhary proved that the \textsc{Acyclic Matching} problem is NP-complete for
comb-convex bipartite graphs and dually chordal graphs. They also proved that the problem is hard to approximate within a ratio of $n^{1-\epsilon}$ for any $\epsilon>0$, unless P = NP and this problem is APX-complete for $2k + 1$-regular graphs
for $k \geq3$.

The \textsc{Acyclic Matching} problem is known to be polynomial time solvable for the class of bipartite permutation graphs, chain graphs \cite{panda},  $P_4-$free graphs, $2P_3-$free graphs \cite{Onsome}, chordal graphs \cite{chordal}, split graphs and proper interval graphs \cite{chaud}. In \cite{Onsome}, Furst and Rautenbach proved that for a given graph $G$, deciding that the size of the maximum acyclic matching of $G$ is equal to the size of the maximum matching of $G$ is NP-hard even for bipartite graphs with a perfect matching and the maximum degree $4$. This result contrasts with a result by Kobler and Rotics \cite{LinIM} that deciding if the size of the maximum induced matching of $G$ is equal to the size of the maximum matching is polynomially solvable for general graphs. Also,  Furst and Rautenbach et al. \cite{aprox} proved that every graph with $n$ vertices, the maximum degree $\Delta$, and no isolated vertex, has an acyclic matching of size at least $(1-o_{_\Delta}(1))\, {6n}/{\Delta^2}$, and they explained how to find such an acyclic matching in polynomial time. Moreover, they provided a ${(2\Delta+2)}/{3}$-approximation algorithm for the  \textsc{Acyclic Matching} problem, based on greedy and local search strategies.\\

\begin{table}[t]
	\centering\footnotesize
	\caption{\label{tbl:results}Overview of new NP-hardness and parameterized results for the \textsc{Acyclic Matching} problem. Herein, $c_4$ and $tw$ denote the  number of cycles with length four and the treewidth of the input graph, respectively. }
	\vskip 7pt
	\begin{tabular}{ p{.15\textwidth} p{.65\textwidth}}     
		\toprule
		Parameter & Result \\%

		\midrule
		-&
		NP-hard for planar perfect elimination bipartite graphs with maximum degree three and girth at least $l\geq3$ (Theorem~\ref{theorem:1})\\
		\cmidrule{1-2}
		-&
		NP-hard for planar line graphs with maximum degree $4$ (Theorem~\ref{theorem:2})\\
		\cmidrule{1-2}
		$k$ &
		W[1]-hard on bipartite graphs (Theorem~\ref{theorem:4})\\
		\cmidrule{1-2}
		$k$ &
		FPT for line graphs (Theorem~\ref{theorem:7})\\
		\cmidrule{1-2}
		$k$ &
		FPT for every proper minor-closed class of graphs (Theorem~\ref{theorem:14})\\
		\cmidrule{1-2}
		$tw$ &
		FPT  (Theorem~\ref{theorem:8})\\
		\cmidrule{1-2}
		$(k,c_4)$ &
		FPT  (Theorem~\ref{theorem:12})\\
		\bottomrule
	\end{tabular}
\end{table}

In this paper, we prove some new hardness results about the \textsc{Acyclic Matching} problem and investigate its parameterized complexity. Our main results are summarized in Table~\ref{tbl:results}. 
The organization of forthcoming sections is as follows. In Section~\ref{sec:not}, we give the necessary notations and definitions which are needed later. In Section~\ref{sec:hard}, we provide our hardness results for the \textsc{Acyclic Matching} Problem including the fact that the problem remains NP-complete for the class of planar perfect elimination bipartite graphs with maximum degree three and girth of arbitrary large as well as the class of planar line graphs with maximum degree four. We also prove that the problem is W[1]-hard on bipartite graphs with respect to the parameter $k$.
In Section~\ref{sec:fpt}, we prove that the problem is fixed parameter tractable with respect to the parameters $tw$ and $(k,c_4)$, where  $tw$ and $c_4$ denote the treewidth and the number of cycles with length four of the input graph, respectively.
Moreover, we prove that the problem is fixed parameter tractable with respect to $k$ in the class of line graphs and also every proper minor-closed class of graphs (including planar graphs).

\section{Notations and Conventions}\label{sec:not}
For a function $f$, the domain and the range of $f$ are denoted by $\dom(f)$ and $\range(f)$, respectively. For a subset $X\subseteq \dom(f)$, the restriction of $f$ on $X$ is denoted by $f|_X$. 
Given a positive integer $k$, the notation $[k]$ stands for the set $\{1,\ldots,k \}$.
The set of neighbors of a vertex $v\in V(G)$ in the graph $G$ is denoted by $N_G(v)$ and the closed neighborhood of $v$ in $G$ is defined as $N_G[v]=N_G(v)\cup \{v\}$. The degree of a vertex $v$ in $G$ is defined as $\deg_G(v)=|N_G(v)|$.  For a subset of vertices $S\subseteq V(G)$, $N_G(S)$ and $N_G[S]$ are respectively defined as $\cup_{v\in S} N_G(v)\setminus S$ and $\cup_{v\in S} N_G[v]$. Also, we drop the subscript $G$ whenever there is no ambiguity. 
Two nonadjacent vertices $u$ and $v$ in $V(G)$ is called \textit{twin} if $N_G(u)=N_G(v)$.
An edge is called a \textit{pendant edge} if one of its endpoints has degree equal to one. 
For two disjoint subsets of vertices $A,B\subset V(G)$, we say that $A$ \textit{is complete} (resp. \textit{incomplete}) to $B$, if every vertex in $A$ is adjacent (resp. nonadjacent) to every vertex in $B$. Let $A$ and $B$ be two disjoint subsets of vertices in the graph $G$ such that $|A|=|B|$ and let $f:A\to B$ be a bijection. We say that \textit{$A$ is anti-matched to $B$} by  $f$ in $G$ if every vertex $v\in A$ is adjacent to every vertex in $B$ except $f(v)$. An \textit{anti-matching} is a bipartite graph $G$ with the bipartition $(X,Y)$ such that $X$ is anti-matched to $Y$ in $G$ by a bijection $f$.

A matching $M$ in a graph $G$ is a set of edges in $E(G)$ which are mutually nonadjacent (i.e. have no common endpoint). The set of all endpoints of the edges in the matching $M$ is denoted by $V(M)$ and the subgraph of $G$ induced on $V(M)$ is denoted by $G[M]$. We say that $M$ saturates all vertices in $V(M)$. Also, for a subset of vertices $S\subset V(G)$, the induced subgraph of $G$ on $S$ is denoted by $G[S]$. An \textit{induced} (resp. \textit{acyclic}) \textit{matching} is a matching $M$ such that $G[M]$ has exactly $|M|$ edges (resp. is acyclic).
The size of the maximum matching, the maximum induced matching and the maximum acyclic matching are denoted by $\alpha^\prime(G)$, $\alpha^\prime_{\text{in}}(G)$ and $\MAM(G)$, respectively.
The \textsc{Acyclic (Inudced) Matching} problem is defined as follows.

\begin{probc}
	{Acyclic Matching (AM)}{A graph $G$ and integer $k$.}{Is there an acyclic matching $M$ in $G$ such that $|M|=k$? }
\end{probc}	
\begin{probc}
	{Induced Matching (IM)}{A graph $G$ and integer $k$.}{Is there an induced matching $M$ in $G$ such that $|M|=k$?}
\end{probc}

The line graph of $G$ is a graph denoted by $L(G)$ whose vertices are corresponding to the edges of $G$ and the vertices $e$ and $e'$ in $L(G)$ are adjacent if their corresponding edges in $G$ are adjacent. A graph $G$ is called a line graph if there is a graph $H$ such that $G=L(H)$. 

Let $n, m$ be two integers with $n, m \geq 2$.
An \textit{$(n \times m)$-grid} is the graph $G_{n\times m}$ with $V(G_{n\times m}) = [n] \times [m]$ and
$E(G_{n\times m}) = 
\{\{(i_1, j_1),(i_2, j_2)\}\ : \ |i_1-i_2|+|j_1-j_2| = 1,\ i_1, i_2 \in [n],\ j_1, j_2 \in [m]\}$.
A \textit{grid} graph is defined as an induced subgraph of an $(n \times m)$-grid for some integers $m,n\geq2$.  
An \textit{elementary $(n \times m)$-wall} is a graph $G =
(V, E)$ with the vertex set
\begin{align*}
V =&\{(1, 2j- 1)\ :\ 1 \leq j \leq m\}
\ \cup
\ \{
(i, j) \ :\ 1 < i < n, 1 \leq j \leq 2m\}
\ \cup \\
&\{
(n, 2j -1) \ :\ 1 \leq j \leq m, \ \text{if $n$ is even}\}\ \cup
\ \{
(n, 2j) \ : \ 1 \leq j \leq m, \ \text{if $n$ is odd}\}   
\end{align*}
and the edge set
\begin{align*}
E =&\{\{(1, 2j - 1),(1, 2j + 1)\} \ : \ 1 \leq j < m\} \ \cup \ \{\{(i, j),(i, j + 1)\}\ :\ 2 \leq i < n, 1 \leq j < 2m\}\\
&\cup \ \{\{(n, 2j),(n, 2j + 2)\}\ :\ 1 \leq j < m \ \text{if $n$ is odd}\}\\	
&\cup \ \{\{(n, 2j - 1),(n, 2j + 1)\}\ :\ 1 \leq j < m \ \text{if $n$ is even}\}\\
&\cup \ \{\{(i, j),(i + 1, j)\}\ :\ 1 \leq i < n, 1 \leq j \leq 2m, \ \text{either $i, j$ are both odd, or $i, j$ are both even}\}
\end{align*}
A \textit{subdivision} of a graph $G$ is a graph which is obtained from $G$ by replacing each edge of $G$ by a path of arbitrary length. An \textit{$(n \times m)$-wall} is a subdivision of an elementary $(n \times m)$-wall. A graph $G$ is called \textit{chordless} if every cycle in $G$ is an induced subgraph of $G$. \\

For the definition of \textit{tree decomposition} and \textit{treewidth} see the standard textbooks e.g. \cite{cygan}. In this paper, we use a modified version of tree decomposition called  \textit{nice tree decomposition}, which is
 defined as a pair  $\mathcal{T} = (T, \{ X_{i} \}_{i\in V(T )} )$ such that $T$ is a binary tree rooted at a vertex $r$ with the following properties. 
 
 \begin{enumerate}
 	
 	\item $X_{r} = \emptyset$ and $X_{l} = \emptyset$ for every leaf $l$ of $T$. In other words, all the leaves as well as the root possess empty bags.
 	\item Every non-leaf node of $T$ is of one of the following four types:
 	\begin{itemize}
 		\item \textbf{Introduce vertex node:} a node $i$ with exactly one child $j$ such that $X_{i} = X_{j} \cup \{v\}$ for some vertex $v \notin X_{j}$; we say that $v$ is introduced at $i$.
 		\item \textbf{Introduce edge node:} a node $i$, labeled with an edge $uv\in E(G)$ such  that $u, v \in X_i$, and with exactly one child $j$ such that $X_i = X_{j}$; we say
 			that edge $uv$ is introduced at $i$.
 		\item \textbf{Forget node:} a node $i$ with exactly one child $j$ such that $X_{i} = X_{j} \setminus \{w\}$ for some vertex $w \in X_{j}$; we say that $w$ is forgotten at $i$.
 		\item \textbf{Join node:} a node $i$ with two children $j,k$ such that $X_{i} = X_{j} = X_{k}$.
 	\end{itemize}
 	
 \end{enumerate}

An algorithm that transforms in linear time a tree decomposition into a nice one of the same treewidth is presented in \cite{kloks1994vol}.
The width of a tree decomposition is the size of its largest bag $X_i$ minus one. The treewidth $\tw(G)$ of a graph $G$ is the minimum width among all possible tree decompositions of $G$.

\section{Hardness Results}\label{sec:hard} Let $G=(X, Y, E)$  be a bipartite graph. An edge $e=xy$ is said to be a \textit{bisimplicial} edge if $G[N[\{x,y\}]]$ is a complete bipartite subgraph of $G$. Let $\mathcal{M}=(x_1y_1,x_2y_2,\ldots,x_ky_k)$ be an ordering of  pairwise nonadjacent edges  of $G$. Let $A_j=\{x_1,x_2,\ldots,x_j\}\cup \{y_1,y_2,\ldots,y_j\}$ and let $A_0=\emptyset$. The ordering $\mathcal{M}$ is called a \textit{perfect edge elimination} ordering of G if $x_{j+1}y_{j+1}$ is a bisimplicial edge in $G[(X\cup Y)\backslash A_j]$ for $j=0,1,\ldots,k-1$ and $G[(X\cup Y)\backslash A_k]$ has no edge. A bipartite graph for which there exists a perfect edge elimination ordering is called a \textit{perfect elimination bipartite graph}. Panda and Pradhan \cite{panda} showed that the \textsc{Acyclic Matching} problem is NP-hard for perfect elimination bipartite graphs. In this section, we strengthen their result and prove that the problem remains NP-hard for planar perfect elimination bipartite graphs with maximum degree three and girth at least $l$ for any arbitrary positive integer $l$.
Our reduction is from the following problem which is known to be NP-hard for $4$-regular planar graphs \cite{11,stck} (in fact, it is the dual of the \textsc{Feedback Vertex Set} Problem). 

\begin{probc}
	{Induced Acyclic Subgraph (IAS)}{A graph $G$ and integer $k$.}{Is there a subset of vertices $S\subseteq V(G)$ such that $|S|=k$ and $G[S]$ is acyclic?}
\end{probc}	

\begin{theorem}
	For every integer $l\geq 3$, the \textsc{Acyclic Matching} problem is NP-hard for planar perfect elimination bipartite graphs with maximum degree three and girth at least $l$.\label{theorem:1}
\end{theorem}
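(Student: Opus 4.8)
The plan is to reduce from \textsc{Induced Acyclic Subgraph} (IAS) on $4$-regular planar graphs, which is NP-hard. Given such an instance $(G,k)$, fix a plane embedding of $G$ and, at each vertex $v$, the rotation $e_v^1,\dots,e_v^4$ of its incident edges. In polynomial time I build a graph $H$ together with an explicit constant $F=F(G,\ell)$ depending only on the construction, such that $\MAM(H)=F+\beta(G)$, where $\beta(G):=\max\{|S|:S\subseteq V(G),\ G[S]\ \text{acyclic}\}$. Since an acyclic matching of size $k'$ exists if and only if $\MAM(H)\ge k'$, the instance $(G,k)$ is a yes-instance of IAS iff $(H,k')$ with $k':=F+k$ is a yes-instance of \textsc{Acyclic Matching}. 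The graph $H$ will at the same time be planar, bipartite, of maximum degree three, of girth at least $\ell$, and perfect elimination bipartite.

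The construction glues a \emph{vertex gadget} $D_v$ to each $v\in V(G)$ and an \emph{edge gadget} $F_e$ to each $e\in E(G)$. Each $D_v$ is a tree of maximum degree three with four \emph{ports} $p_v^1,\dots,p_v^4$, all of degree two in $D_v$ and lying on one common path of $D_v$ that admits a perfect matching; $D_v$ also carries a few pendant edges. It has two relevant matchings: a \emph{saturating} matching $M_v^{A}$ of size $a_v$ whose endpoint set contains all four ports and the whole port-carrying path (so every port-to-port subpath of $D_v$ lies in $V(M_v^A)$), and an \emph{avoiding} matching $M_v^{B}$ of size $b_v$ whose endpoint set misses all four ports; the pendant edges of $D_v$ are chosen exactly so that $a_v-b_v=1$. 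Each edge gadget $F_e$ with $e=vv'$ is a path of even internal length joining the relevant ports $p_v^{i}$ and $p_{v'}^{j}$, equipped with pendant edges that force, in every matching of size at least $k'$, a perfect matching of size $m_e$ on its internal vertices while leaving the two end-ports untouched from the $F_e$-side. Put $F:=\sum_v b_v+\sum_e m_e$ and $k':=F+k$. The gadgets are attached along the ports following the plane embedding of $G$, so $H$ is plane; each $D_v$ is a tree, hence bipartite, which $2$-colours the ports, and the edge-gadget lengths are then fixed modulo $2$ and can be taken at least $\ell$, so $H$ is bipartite; every cycle of $H$ uses at least one edge gadget, of length at least $\ell$, so $\mathrm{girth}(H)\ge\ell$; and all degrees are at most three by construction.

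For correctness, first the easy inequality $\MAM(H)\ge F+\beta(G)$: given $S$ with $G[S]$ acyclic, let $M$ consist of $M_v^A$ for $v\in S$, of $M_v^B$ for $v\notin S$, and of the forced internal matching on each $F_e$; then $|M|=\sum_v b_v+|S|+\sum_e m_e=F+|S|$, and $M$ is a matching since no $F_e$ uses a port. If $H[V(M)]$ contained a cycle $C$, then $C$ would consist of edge gadgets and port-to-port subpaths of vertex gadgets all of whose vertices lie in $V(M)$; the end-ports of a used $F_e$ lie in $V(M)$ only when both its endpoints are in $S$, so recording the vertices of the $D_v$'s visited by $C$ yields a closed walk in $G$ with pairwise distinct edges and all vertices in $S$, hence a cycle of $G[S]$, a contradiction. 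For the reverse inequality, start from a maximum acyclic matching $M$ of $H$ and, by local exchanges, push it to canonical form: the pendant structures show every maximum matching takes all forced edges, and on each $D_v$ one may assume $M$ restricts to $M_v^A$ or $M_v^B$ without decreasing $|M|$ or creating a cycle; letting $S$ be the set of $v$ in state $A$, the same closed-walk argument read backwards shows $G[S]$ is acyclic, so $\MAM(H)=|M|=F+|S|\le F+\beta(G)$. Finally $H$ is perfect elimination bipartite: peel the pendant edges of the gadgets one at a time---each is bisimplicial because $H$ is triangle-free, so a pendant edge's closed neighbourhood induces a star $K_{1,t}$, which is complete bipartite---after which the vertex gadgets fall apart along their cut vertices and what remains is a forest, peeled leaf by leaf; the removed edges form a matching covering all edges, i.e.\ a perfect edge elimination ordering.

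The main obstacle is the design of the vertex gadget under the degree-three restriction: since no single vertex can separate four ports, the ports must be distributed along a path with several built-in pendant edges, placed so that (i) deleting their non-leaf endpoints breaks every port-to-port subpath---needed both so that an avoiding matching genuinely destroys all cycles through $D_v$ and so that the pendant-peeling produces a forest; (ii) the saturating and avoiding matchings differ in size by exactly one, which is what makes the count $k'=F+k$ exact; and (iii) the exchange argument in the ``$\le$'' direction really does reach one of the two canonical states from any locally optimal acyclic matching. Once the gadget is pinned down, the remaining properties---planarity, the degree bound, bipartiteness, the girth condition, and the elimination ordering---follow from routine choices of path parities and from triangle-freeness of $H$.
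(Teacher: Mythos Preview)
Your proposal is correct and follows essentially the same approach as the paper: a reduction from \textsc{Induced Acyclic Subgraph} on $4$-regular planar graphs, with tree-shaped vertex gadgets having two matching states differing by exactly one (saturating versus avoiding), path-with-pendant edge gadgets whose length controls the girth, and the same cycle-correspondence argument for correctness. The paper makes the gadgets explicit (six pendant edges plus one central edge per vertex gadget, $t=2\ell-3$ pendant edges per edge gadget), while you describe them abstractly, but the framework, the counting $k'=F+k$, the local-exchange normalisation, and the perfect-elimination-bipartite verification via peeling pendant/leaf edges all match.
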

\begin{proof}
	We prove the theorem by a polynomial reduction from \textsc{Induced Acyclic Subgraph} problem that is known to be NP-hard for 4-regular planar graphs \cite{11}. Let $(G(V,E),k)$ be an instance of \textsc{Induced Acylcic Subgraph} problem where $G$ is a 4-regular planar graph on $n$ vertices and $m$ edges. Now, fix an ordering on the edge set $E$. Also, fix an integer $l\geq 3$.
	We are going to construct a graph $G'$ as follows. 
	
	\begin{figure}
		\includegraphics[width=\textwidth,
		height=\textheight,
		keepaspectratio,
		trim=0 35 0 0pt,]{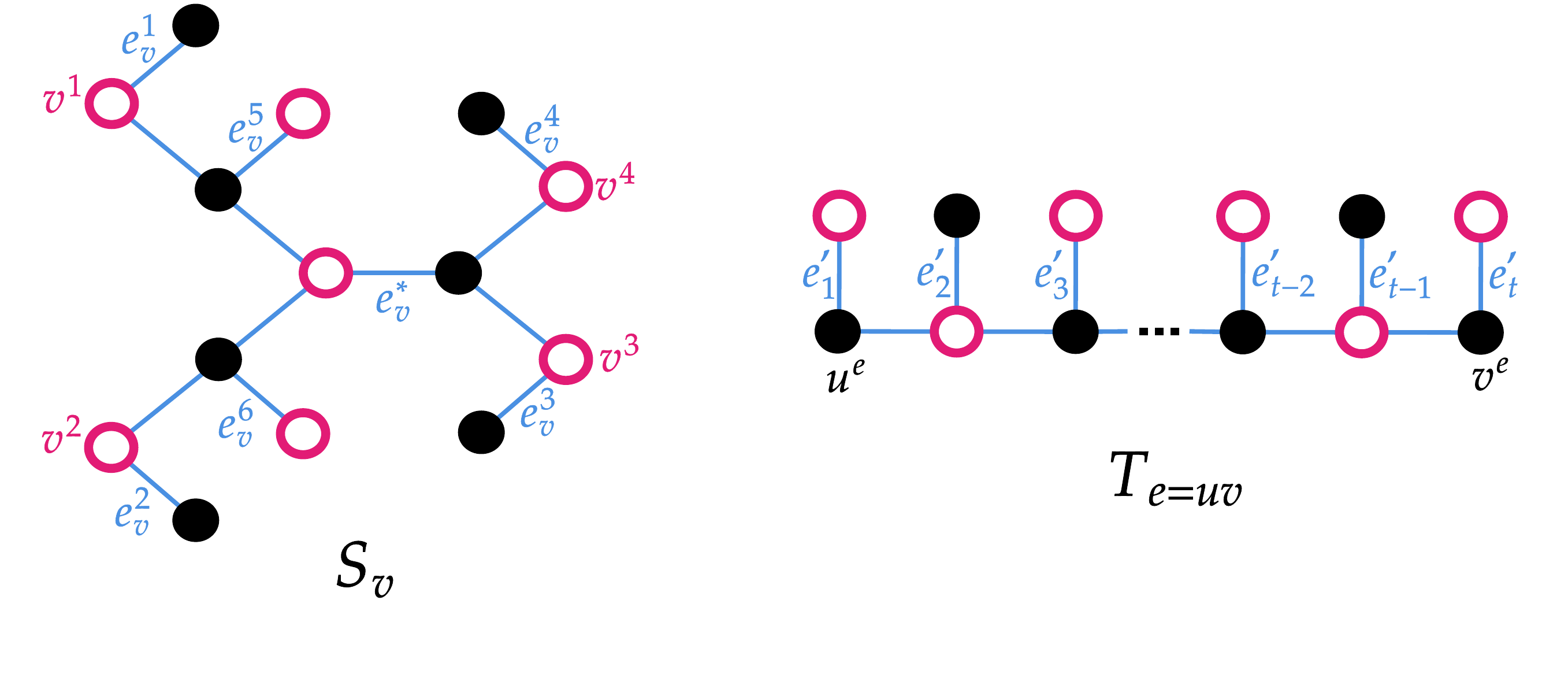}
		\caption{The vertex gadgets $S_v$ and the edge gadgets $T_e$}\label{fig:1a}		
	\end{figure}
	
	For every vertex $v \in V$, consider a vertex gadget $ S_v$  and for every edge $e \in E$ consider an  edge gadget $T_e$ as in Figure~\ref{fig:1a}, where $t=2l-3$. Then, the vertices of $G'$ is the disjoint union of vertices of all gadgets $S_v$, $v\in V$ and $T_e$, $e\in E$.  If $e=uv$ is an edge of $G$ such that $e$ is the $i$th edge in the set of edges incident with $u$ and $j$th edge in the set of edges incident with $v$, then connect $u^i$ in $S_u$ to $u^e$ in $T_e$ and connect  $v^j$ in $S_v$ to $v^e$ in $T_e$. The edge set of $G'$ is the set of all these edges along with the edges inside all gadgets $S_v$, $v\in V$ and $T_e$, $e\in E$. 		 
	
	Now we claim that $G$ has an induced acyclic subgraph on $k$ vertices if and only if $G'$ has an acyclic matching of size $tm+6n+k$. In order to prove the claim, let $H$ be an induced acyclic subgraph of $G$ on $k$ vertices. Define,
	
	\[
	M:=\{e'_1,\ldots, e'_t\,:\,e\in E(G)\}\cup\{e^1_v\,,\,e^2_v\,,\,e^3_v\,,\,e^4_v\,,\,e^5_v\,,\,e^6_v\,:\,v\in V(G)\}\cup\{e^*_v\,:\,v\in V(H)\}.
	\] 
	
	It is clear that $M$ is a matching of size $tm+6n+k$. Suppose that there exists a cycle $C$ in $G'[M]$. So $C$ is crossing through some gadgets $S_{v_1}\,,\,T_{v_1v_2}\,,\,S_{v_2}\,\dots\,S_{v_{h}}\,,\,T_{v_{h}v_1}\,,\,S_{v_1}$ for some vertices $v_1,\dots ,v_h\in V(G)$. Now, we have $v_i\in V(H)$, for all $i\in [h]$, since otherwise $e^*_{v_i}\notin M$ and it is not possible that $C$ passes through $S_{v_i}$. Therefore, $v_1, v_2, \ldots, v_h$ form a cycle in $H$ which is a contradiction. Thus, $M$ is an  acyclic matching of size $tm+6n+k$. Conversely, let $M$ be an acyclic matching of size $tm+6n+k$ in $G'$.
	\begin{claim}\label{clm1}
		There is an acyclic matching $M'$ of size at least $tm+6n+k$ in $G'$ such that for every $e\in E(G)$, $e'_1,\ldots,e'_t\in M'$.
	\end{claim}
	\begin{proof}[Proof of Claim~\ref{clm1}]\renewcommand{\qedsymbol}{} We construct $M'$ from $M$.
		Let $E^*=\{e^*_v:\ v\in V(G) \}$. First, note that one can assume that every edge $e\in M\setminus E^*$ is a pendant edge, since one can replace every edge in $M\setminus E^*$ with its adjacent pendant edge and the matching remains acyclic. 
		
		Now,  consider a gadget $T_e$, where $e=uv\in E(G)$ and  $e_h'\notin M$, for some $h\in[t]$. Then, add $e_h'$ to $M$ and if both $u^i\in V(S_u)$ and $v^j\in V(S_v)$ are covered by the matching $M$, then remove the edge in $M$ which covers $u^i$. The obtained matching $M'$ remains acyclic. This proves the claim.
	\end{proof}
	
	Now, consider the matching $M'$. It has $t$ edges from each edge gadget, so it has at least  $6n+k$ edges from vertex gadgets. The size of the maximum matching of a vertex gadget is $7$, therefore there are at least $k$ vertex gadgets $S_{v_1},\dots , S_{v_k}$ that $M'$ contains all edges $e^1_{v_i}\,,\,e^2_{v_i}\,,\,e^3_{v_i}\,,\,e^4_{v_i}\,,\,e^5_{v_i}\,,\,e^6_{v_i}$ and $e^*_{v_i}$, $1\leq i\leq k$. Let $H$ be the induced subgraph  of $G$ on the vertex set $\{v_1,\dots,v_k\}$. We claim that $H$ is acyclic. If $C=v_{i_1},\dots,v_{i_h}$ is a cycle in $H$, then $C$ can be extended to a cycle in $G'[M']$ that crosses through the gadgets $S_{v_{i_1}}$, $T_{v_{i_1}v_{i_2}}$, $S_{v_{i_2}}$, $\ldots$, $S_{v_{i_h}}$, $T_{v_{i_h}v_{i_1}}$, $S_{v_{i_1}}$, because $G'[M']$ contains all these gadgets. This contradiction implies that $H$ is acyclic. This completes the reduction. 
	
	Also, note that the maximum degree of $G'$ is three. Since $G$ is planar, obviously $G'$ is also planar. Moreover, all cycle lengths in $G'$ are even and at least $3t+9$ which is greater than $l$. So, $G'$ is bipartite with girth at least $l$. Finally, $G'$ is a perfect elimination bipartite graph. To see this, note that $\tilde{M}=\{e'_h, e\in E(G), h\in [t]\}\cup \{ e_v^i, v\in V(G), i\in [6]\}\cup \{ e_v^*, v\in V(G)\}$ is a perfect matching of $G'$. Consider an ordering of $\tilde{M}$ as $\mathcal{M}=(x_1y_1,\ldots, x_ry_r)$ such that all edges in $\{e^*_v, v\in V(G)\}$ lie after all remaining edges in $\tilde{M}$. For each $i\in [r]$, define $A_i=\{x_1,x_2,\ldots,x_i\}\cup \{y_1,y_2,\ldots,y_i\} $. One can see that for each $i\in [r]$, the edge $x_{i+1}y_{i+1}$ is a bisimplicial edge in $G'_i=G'[V(G')\backslash A_i]$ because $G'_i[N[\{x_{i+1},y_{i+1}\}]$ is a star. Also, $G'[V(G')\backslash A_r]$ has no edge because $\tilde{M}$ is a perfect matching. Thus, $\mathcal{M}$ is a perfect edge elimination ordering of $G'$.
\end{proof}
It is proved in \cite{2} that the \textsc{Induced Acyclic Subgraph} problem is W[1]-hard with respect to the parameter $k$. By a simple reduction from this problem (i.e. adding pendant edges to each vertex of the input graph) one can prove that the \textsc{Acyclic Matching} problem is also W[1]-hard with respect to the parameter $k$. Here, we improve this result and prove that the problem remains W[1]-hard on bipartite graphs.

\begin{theorem}
	The \textsc{Acyclic Matching} problem for bipartite graphs $ ($AMB$) $ is W[1]-hard with respect to the parameter $k$.\label{theorem:4}
\end{theorem}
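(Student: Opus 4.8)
I would prove this by an FPT-reduction from \textsc{Multicolored Clique}, which is W[1]-hard parameterized by the number $k$ of color classes. Given an instance $H$ with classes $V_1,\dots,V_k$ of size $n$, the plan is to build in polynomial time a bipartite graph $G$ together with an integer $k^\prime$ that is a function of $k$ alone, so that $G$ has an acyclic matching of size $k^\prime$ if and only if $H$ has a multicolored clique. The design principle is to use \emph{choice gadgets} whose maximum acyclic matching has a fixed constant size $r$, and whose size-$r$ acyclic matchings are in bijection with the choices they encode — a vertex of a class, or an edge between two classes — and to link several copies of these gadgets by \emph{anti-matchings} (as defined in Section~\ref{sec:not}) so that an inconsistent family of choices forces a $4$-cycle to appear inside $G[M]$. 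Then a globally acyclic matching of size $k^\prime$ is precisely an encoding of a consistent family of choices, that is, a multicolored clique.

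More concretely, I would take one vertex-choice gadget for each class $i$ and one edge-choice gadget for each pair $\{i,j\}$; to keep the whole construction bipartite while still being able to check consistency, each vertex-choice gadget is replicated into $k-1$ oriented copies, one facing each incident edge-gadget, and the copies are tied together by anti-matchings that force them to agree on the selected vertex. Anti-matchings are the right tool: an anti-matching on $n\times n$ vertices has maximum acyclic matching of size at most two, and its size-two acyclic matchings are rigidly constrained, so on one hand the total matching budget sums to $k^\prime=O(k^2)$ — a function of $k$ only, as a parameterized reduction requires — and on the other hand the only way to realize a gadget's share of the budget is the intended one. I would record the behaviour of each gadget type as a short lemma (isolated gadget has maximum acyclic matching $=r$; its size-$r$ acyclic matchings are exactly the admissible choices), after which the forward direction is immediate: given a multicolored clique, realize each choice in its gadget; since the choices are consistent every candidate $4$-cycle is broken, so $G[M]$ is a disjoint union of acyclic gadget subgraphs and $|M|=k^\prime$.

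The backward direction is where the real work lies. Given an acyclic matching $M$ with $|M|=k^\prime$, I must rule out that $M$ ``cheats'' by spending budget across gadgets or in an unintended way; a counting argument based on the per-gadget capacity bounds should pin $M$ down to exactly $r$ edges inside each gadget and nothing between gadgets, so the gadget lemmas extract a vertex $u_i$ from class-gadget $i$ and an edge $e_{ij}$ from edge-gadget $\{i,j\}$. The anti-matching links then force the vertices $u_i$ to be consistent with all the edges $e_{ij}$, and if for some pair the endpoints of $e_{ij}$ were not $u_i$ and $u_j$ a $4$-cycle would survive in $G[M]$, contradicting acyclicity; hence $\{u_1,\dots,u_k\}$ is a multicolored clique. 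The subtle point throughout is that every obstruction introduced is a $C_4$, so I must be careful that no ``accidental'' large acyclic matching can dodge all of them without corresponding to a clique — which is exactly what the rigidity of anti-matchings buys, and also why the instances produced contain many $4$-cycles (complementing the $(k,c_4)$-FPT result of Theorem~\ref{theorem:12}). Bipartiteness itself is automatic since every edge of $G$ is placed between the two sides by construction, so the only extra content over the known non-bipartite W[1]-hardness of \textsc{Acyclic Matching} is the bipartition/orientation bookkeeping of the gadgets.
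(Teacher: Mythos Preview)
Your starting point differs from the paper's: you reduce directly from \textsc{Multicolored Clique}, whereas the paper chains $\textsc{Irredundant Set}\leq\textsc{Multicolor Irredundant Set}\leq\textsc{Multicolor Acyclic Matching on Bipartite Graphs}\leq\textsc{AMB}$. Both routes are plausible, and your use of anti-matchings as rigidity devices is precisely the combinatorial engine of the paper's last (and main) reduction step, so at the level of ideas you are close.

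The genuine gap is in the backward direction. You assert that ``a counting argument based on the per-gadget capacity bounds should pin $M$ down to exactly $r$ edges inside each gadget and nothing between gadgets,'' but this does not follow from what you have set up. The bound $\MAM(\text{gadget})=r$ concerns an isolated gadget; it says nothing about how many edges of a global $M$ may run \emph{between} gadgets, nor about how many vertices of a given anti-matching part can be saturated by such cross-edges. Once cross-edges are allowed, a maximum acyclic matching of $G$ need not decompose into per-gadget pieces at all, and Proposition~\ref{lem6} no longer caps the contribution of any single part. Saying ``the rigidity of anti-matchings buys this'' is circular: that rigidity is exactly what has to be \emph{enforced} in the ambient graph. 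The paper does this with an additional layer you are missing: to every anti-matching part ($A_i$, $A'_i$, $\hat B_{ij}$, $\hat B'_{ij}$, $\tilde B_{ij}$, $\tilde B'_{ij}$, $\{x'_{ij},z'_{ij}\}$) it attaches a pendant induced matching $\mathcal{M}$ of size $ck^2+1$, complete on one side to that part. Claims~\ref{cc}--\ref{leaf} then show that any acyclic matching of the target size must contain \emph{all} edges of every such $\mathcal{M}$, which in turn forces $|V(M')\cap A_i|\le 1$ (and similarly for the other parts); only after this localisation does the per-gadget accounting become tight and the intended extraction of $e_i$'s go through. Correspondingly the paper's budget is $k'=\Theta(k^4)$, not $O(k^2)$ --- your smaller budget is another symptom that the enforcement layer is absent. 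Your plan is compatible with a correct proof, but as written it lacks the mechanism that makes the counting argument close.
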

In order to prove the above theorem, first we need a couple of easy observations.
\begin{pro}\label{lem6}
	Let $G$ be an anti-matching on at least four vertices. Then $\MAM(G)= 2$.
\end{pro}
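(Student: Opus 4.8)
Recall that an anti-matching $G$ is a bipartite graph with bipartition $(X,Y)$, $|X|=|Y|=:n\ge 2$, together with a bijection $f\colon X\to Y$ such that each $x\in X$ is adjacent to every vertex of $Y$ except $f(x)$. The claim is that when $n\ge 2$ (i.e. $|V(G)|\ge 4$), the maximum acyclic matching of $G$ has size exactly $2$. The plan is to verify the two inequalities $\MAM(G)\ge 2$ and $\MAM(G)\le 2$ separately; the lower bound is essentially immediate, while the upper bound is the real content.

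**The lower bound.** First I would exhibit an acyclic matching of size $2$. Pick distinct $x_1,x_2\in X$ and set $y_i=f(x_i)$ for $i=1,2$; since $f$ is a bijection, $y_1\ne y_2$. Now consider the two edges $x_1 y_2$ and $x_2 y_1$: these are present in $G$ (because $x_1$ is non-adjacent only to $f(x_1)=y_1$, and $y_2\ne y_1$; similarly for the second edge), and they share no endpoint, so $M=\{x_1y_2,\,x_2y_1\}$ is a matching of size $2$. Its induced subgraph $G[M]$ lies on $\{x_1,x_2,y_1,y_2\}$; the only possible extra edge inside a bipartite graph on this set is $x_1y_1$ or $x_2y_2$, and both are excluded by the anti-matching condition. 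Hence $G[M]$ is exactly a matching on two edges — in particular acyclic — so $\MAM(G)\ge 2$.

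**The upper bound — the main obstacle.** The harder direction is showing that no acyclic matching can have size $\ge 3$. Suppose for contradiction $M=\{x_1y_1',\,x_2y_2',\,x_3y_3'\}$ is an acyclic matching with $x_i\in X$ distinct and $y_i'\in Y$ distinct. Consider the six vertices $x_1,x_2,x_3,y_1',y_2',y_3'$ and the bipartite graph $H=G[V(M)]$ they induce. For each ordered pair $i\ne j$, the edge $x_iy_j'$ fails to be in $H$ only if $y_j'=f(x_i)$. Since $f$ is a bijection, for each fixed $i$ there is at most one $j$ with $y_j'=f(x_i)$, and likewise for each fixed $j$ at most one $i$. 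Thus the "missing" edges among the $x_i$–$y_j'$ pairs ($i\ne j$) form a partial matching in the bipartite complement, so there are at most three missing non-matching edges and they are vertex-disjoint. I would then argue that whatever three (or fewer) pairwise-disjoint missing edges one deletes from the complete bipartite graph $K_{3,3}$ (on classes $\{x_1,x_2,x_3\}$, $\{y_1',y_2',y_3'\}$, already missing the three matching edges $x_iy_i'$... wait, the matching edges $x_iy_i'$ are present), the resulting graph $H$ still contains a $4$-cycle. Concretely: $H$ contains all three matching edges $x_iy_i'$ plus at least $9-3-3=3$ of the cross edges; a short case check shows that $K_{3,3}$ minus any matching of size $\le 3$ (namely the at most three missing cross edges) still has a $4$-cycle through two of the matching edges, contradicting acyclicity of $M$. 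The crux is this finite case analysis on $K_{3,3}$ minus a matching, which I would organize by the number of missing cross edges ($0$, $1$, $2$, or $3$) and, using the symmetry of $K_{3,3}$, reduce to checking one representative configuration in each case — in every case one locates two indices $i,j$ and a $4$-cycle $x_i y_k' x_j y_\ell' x_i$ or similar lying in $H$.

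**Remark on packaging.** An alternative, perhaps cleaner, route for the upper bound is to observe directly that in any set of three independent edges of an anti-matching, two of them — say $x_1y_1'$ and $x_2y_2'$ — must have all four "cross" edges $x_1y_2', x_2y_1'$ present: this holds unless one of these crosses is a forbidden $f$-pair, but among three edges there are three ordered pairs $(x_i, f(x_i))$ landing in $\{y_1',y_2',y_3'\}$ and a counting/pigeonhole argument shows some two edges avoid being linked by a forbidden pair in both directions, giving the $4$-cycle $x_1y_1'x_2y_2'x_1$... I mean $x_1y_2'x_2y_1'x_1$ together with the matching edges — a $4$-cycle on $\{x_1,x_2,y_1',y_2'\}$. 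I would pick whichever of these two formulations is shortest to write cleanly; either way the whole proof is a page of elementary bijection-bookkeeping, and no deep tool is needed.
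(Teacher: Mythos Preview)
Your lower bound is fine, but the upper bound contains a genuine gap. You assert that with three matching edges $x_iy_i'$, the induced graph $H=G[V(M)]$ must contain a $C_4$; this is false. Take $n\ge 3$ and suppose $f(x_1)=y_2'$, $f(x_2)=y_3'$, $f(x_3)=y_1'$ (a cyclic derangement). Then all three cross edges $x_1y_2',\,x_2y_3',\,x_3y_1'$ are absent, and the remaining six edges $x_1y_1',\,x_2y_1',\,x_2y_2',\,x_3y_2',\,x_3y_3',\,x_1y_3'$ form exactly a $6$-cycle with no chords. There is no $C_4$ here, so both your case analysis and your ``alternative'' pigeonhole argument (which tries to find two matching edges with both cross edges present) break down in precisely this configuration.

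The paper's proof sidesteps this by not insisting on a $C_4$: it observes that for any three edges of $M$, the induced subgraph on their six endpoints is $K_{3,3}$ minus a (possibly partial) matching, hence is a supergraph of an anti-matching on six vertices --- which \emph{is} the cycle $C_6$. Thus $G[M]$ always contains a $C_6$ and cannot be acyclic. Your argument is easily repaired along the same lines: once you have established that the missing cross edges form a matching of size at most $3$ in $K_{3,3}$, simply extend that matching to a perfect matching of $K_{3,3}$; its complement in $K_{3,3}$ is a $6$-cycle lying entirely inside $H$.
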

\begin{proof}
	It is easy to see that $\MAM(G)\geq 2$. Now, let $M$ be an acyclic matching of $G$. If $|M|\geq 3$, then $G[M]$ is a supergraph of an anti-matching on $6$ vertices which  contains a cycle $C_6$. This contradiction implies that $|M|\leq 2$.
\end{proof}

A \textit{blow-up} of a graph $G$ is a graph $G'$ which is obtained from $G$ by a finite sequence of the following operation: take a vertex $v\in V(G)$ and replace $v$ with a nonempty stable set of new vertices $A_v$ and connect every vertex in $A_v$ to every neighbor of $v$ in $G$.
\begin{pro}\label{lem7}
	Let $G$ be a graph and $G'$ be a blow-up of $G$. Then, $\MAM(G)= \MAM(G')$.
\end{pro}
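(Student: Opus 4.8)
The plan is to prove the two inequalities $\MAM(G')\ge\MAM(G)$ and $\MAM(G')\le\MAM(G)$ separately, exploiting the fact that a blow-up is built by a finite sequence of single-vertex blow-up operations, so it suffices (by induction on the number of operations) to treat the case where $G'$ is obtained from $G$ by replacing a single vertex $v$ with a stable set $A_v=\{v_1,\ldots,v_r\}$.

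For the inequality $\MAM(G')\ge\MAM(G)$, I would take a maximum acyclic matching $M$ of $G$ and transport it into $G'$: every edge of $M$ not incident to $v$ stays as is, and if some edge $vu\in M$ is incident to $v$, replace it by $v_1u$. Call the result $M'$; clearly $|M'|=|M|$ and $M'$ is a matching of $G'$. To see $G'[M']$ is acyclic, note that $G'[V(M')]$ is obtained from $G[V(M)]$ either unchanged (if $v\notin V(M)$, since then none of $v_2,\ldots,v_r$ is in $V(M')$ and $v_1$ appears only if... it does not) or by relabelling $v$ as $v_1$ — in either case it is isomorphic to an induced subgraph of $G[V(M)]$, hence a forest.

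For the reverse inequality $\MAM(G')\le\MAM(G)$, let $M'$ be a maximum acyclic matching of $G'$. The key observation is that at most one of the blow-up copies $v_1,\ldots,v_r$ can be saturated by $M'$: indeed, if two copies $v_i,v_j$ were both matched, say by edges $v_ia$ and $v_jb$ with $a\ne b$ (they are distinct since $M'$ is a matching, and note $a,b\notin A_v$ as $A_v$ is stable), then since $a,b\in N_G(v)=N_{G'}(v_i)=N_{G'}(v_j)$ the four vertices $v_i,a,v_j,b$ induce a $4$-cycle in $G'[V(M')]$, contradicting acyclicity. Hence $M'$ saturates at most one vertex of $A_v$; contracting that copy back to $v$ (or doing nothing if no copy is saturated) yields a matching $M$ of $G$ with $|M|=|M'|$, and $G[V(M)]$ is isomorphic to $G'[V(M')]$, so it is acyclic. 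Combining both directions gives $\MAM(G)=\MAM(G')$, and the induction over the sequence of operations completes the proof.

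The only mildly delicate point, which I expect to be the main thing to get right, is the bookkeeping in the forward direction when $v\notin V(M)$: one must be careful that introducing the copies $v_1,\ldots,v_r$ as isolated vertices of $G'$ does not accidentally create edges among vertices already in $V(M)$ — it does not, since the blow-up only adds edges incident to the new copies — so $G'[V(M')]=G'[V(M)]=G[V(M)]$ and acyclicity is preserved trivially. Everything else is a routine check that the relevant induced subgraphs are isomorphic, and the $C_4$ argument in the reverse direction is exactly the same phenomenon already used in the proof of Proposition~\ref{lem6}.
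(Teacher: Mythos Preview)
Your proof is correct and follows essentially the same route as the paper: reduce to a single-vertex blow-up, use that $G$ sits inside $G'$ as an induced subgraph for $\MAM(G)\le\MAM(G')$, and for the reverse inequality observe that two saturated copies $v_i,v_j\in A_v$ would force a $C_4$ (the paper phrases this as ``all vertices in $A_v$ are twin''), so $V(M')$ meets $A_v$ in at most one vertex and $G'[M']$ is an induced subgraph of $G$. The paper is just terser---it dispenses with your explicit transport of $M$ by noting the induced-subgraph containment directly---but the content is identical.
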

\begin{proof}
	Let $G'$ be obtained from $G$ by blow-up of a vertex $v\in V(G)$. 
	It is clear that $G$ is an induced subgraph of $G'$ and thus $\MAM(G)\leq \MAM(G') $.
	Now, let $M$ be the maximum acyclic matching of $G'$ and let $H=G'[M]$. If $V(H)$ contains more than one vertex in $A_v$, then $H$ contains a cycle $C_4$ (since all vertices in $A_v$ are twin). Thus, $|V(H)\cap A_v|\leq 1$ and so $H$ is an induced subgraph of $G$. This shows that $\MAM(G)\geq |M|=\MAM(G')$.
\end{proof}

\begin{proof}[Proof of Theorem~\ref{theorem:4}]
	We give a three-stage parameterized reduction. First, consider the following four problems. 
	
	\begin{probc}{Acyclic Matching on Bipartite Graphs (AMB)}
		{A bipartite graph $G$ and a positive integer $k$.}{Is there an acyclic matching of size $k$ in $G$?}
	\end{probc}
	
	\begin{probc}{Multicolor Acyclic Matching on Bipartite Graphs (MAMB)}
		{A bipartite graph $G$ and a $k$-partition $\mathcal{P}=(V_1,\ldots, V_k)$ of $V(G)$.}{Is there a multicolor acyclic matching $M$ for $(G,\mathcal{P})$? (A matching $M=\{u_1v_1,\ldots, u_kv_k\}$ in $G$ is called  multicolor if for every $i\in [k]$, $u_i,v_i\in V_i$.)}
	\end{probc}

	\begin{probc}
		{Irredundant Set (IS)}{A graph $G$ and integer $k$.}{Is there an irredundant set in $G$ of size $k$? (An irredundant set in $G$ is a set $S\subset V(G)$ such that for every $u\in S$, $N[S\setminus \{u\} ] \neq N[S]$.)}
	\end{probc}

	\begin{probc}
		{Multicolor Irredundant Set (MIS)}{A graph $G$ and a $k$-partition $\mathcal{P}=(V_1,\ldots, V_k)$ of $V(G)$.}{Is there a multicolor irredundant set $S$ for $(G,\mathcal{P})$? (A multicolor irredundant set for $(G,\mathcal{P})$ is a set $S\subset V(G)$ such that for every $i\in [k]$,  $|S\cap V_i|=1$ and for $u\in S\cap V_i$, there is a vertex $v\in V_i$ such that $v\in N[S]\setminus N[S\setminus \{u\}]$.)}
	\end{probc}
	
	We provide a sequence of parameterized reductions as follows $\text{IS}\leq \text{MIS} \leq \text{MAMB} \leq \text{AMB}$. In all problems the parameter is $k$. Downey et al. \cite{downey} proved that the \textsc{Irredundant Set} problem is W[1]-hard.

	\paragraph{Parameterize reduction from IS to MIS.}
	Let $(G,k)$ be an instance of IS. Construct a graph $G'$ with the vertex set $V(G')= \{u_i: u\in V(G), i\in [k] \}$ such $u_i$ is adjacent to $u_j$ for every $u\in V(G)$ and $i,j\in [k], i\neq j,$ and if $u$ is adjacent to $v$ in $G$, then $u_i$ is adjacent to $v_j$ for every $i,j\in [k]$. Also, for every $i\in [k]$, let $V_i=\{u_i: u\in V(G)\}$ and $k'=k$. Then $(G',k')$ with the partition $(V_1,\ldots,V_k)$ of $V(G')$ is an instance of MIS.  If $(G,k)$ is a yes-instance with an irredundant set $S=\{u^1,\ldots,u^k \}$, then we claim that $S'=\{u^1_1,\ldots,u^k_k\}$ is a multicolor irredundant set of $G'$. To see this, note that for every $i\in[k]$, there is a vertex $v^i\in N_G[S]\setminus N_G[S\setminus \{u^i\}]$. Thus, $v^i_i\in N_{G'}[S']\setminus N_{G'}[S'\setminus\{u_i^i\}]$. On the other hand, suppose that $(G',k')$ is a yes-instance of MIS. Then, there is a multicolor irredundant set  $S'=\{u_1^1,\ldots,u_k^k \}$ in $G'$. First, note that $S=\{u^1,\ldots, u^k\}$ is a set of size $k$ in $V(G)$, otherwise if $u^i=u^j$ for $i\neq j$, then $N_{G'}[u^i_i]=N_{G'}[u^j_j]$ which is a contradiction with the fact that $S'$ is an irredundant set. Now, for every $i\in [k]$, there is a vertex $v^i_i\in N_{G'}[S']\setminus N_{G'}[S'\setminus\{u_i^i\}]$. Therefore, $v^i\in N_G[S]\setminus N_G[S\setminus \{u^i\} ]$. Hence, $S$ is an irredundant set of size $k$ for $G$ and $(G,k)$ is a yes-instance for IS.

	\paragraph{Parameterized reduction from MIS to MAMB.}
	Let $(G,\mathcal{P})$ be an instance of MIS with the partition $\mathcal{P}=(V_1,\ldots,V_k)$ of $V(G)$. Construct the bipartite graph $G'$ where $V(G')$ is the disjoint union of the set $\{u^i: u\in V(G), i\in\{1,2\}  \}$ and $\{z_i,w_i: i\in [k]\}$. For every vertices $u,v\in V(G)$, if $u\in N[v]$, then $u^i$ is adjacent to $v^j$ for every $i,j\in\{1,2\}$, $i \neq j$. Also, for every $i\in[k]$, $z_i$ is adjacent to all vertices in $\{u^1: u\in V_i\}$. Finally, $w_k$ is adjacent to $z_k$ and for every $i\in [k-1]$, $w_i$ is adjacent to $z_i$ and $z_{i+1}$.
	Now, consider the $2k$-partition $(V'_1,\ldots,V'_{2k})$ of $V(G')$ such that for $i\in [k]$, $V'_i=\{u^j: u\in V_i, j\in\{1,2\}\}$ and for $i\in [2k]\setminus [k]$, $V'_i=\{z_{i-k},w_{i-k}\}$. First, suppose that $S=\{u_1,\ldots u_k\}$ is a multicolor irredundant set for $(G,\mathcal{P})$, where for each $i\in[k]$, $u_i\in V_i$ and there is a vertex $v_i\in V_i \cap (N[S]\setminus N[S\setminus\{u_i\}] )$. Now, consider the multicolor matching $M= \{u_i^1v_i^2: i\in [k]\}\cup \{z_iw_i: i\in[k]\}$ of size $2k$ in $G'$. Note that $G'$ induces a tree on the set  $\{u_i^1,z_i,w_i: i\in [k]\}$. So, if there is a cycle in $G'[M]$, then there must be an edge between $u_i^1$ and $v_j^2$ for some $i\neq j$ which is impossible since $v_j\not\in N_G[S\setminus\{u_j\}]$. Hence, $M$ is a multicolor acyclic matching. On the other hand, suppose that $M$ is arbitrary multicolor acyclic matching in $G'$. Since for each $i\in [2k]\setminus [k]$, $V'_i$ is a set of size two, $\{z_1w_1,\ldots,z_kw_k\}\subset M$. Also, for every $i\in[k]$,  $M$ contains an edge $e_i$ with both endpoints in $V'_i$. So, $e_i=u_i^1v_i^2$, for some $u_i,v_i\in V_i$. We claim that $S=\{u_1,\ldots, u_k\}$ is a multicolor irredundant set in $G$. To see this, note that for each $i\in[k]$, $v_i\in N[u_i]$. If $v_i\in N[u_j]$, for some $j\neq i$, then $v_i^2$ and $u_j^1$ are adjacent in $G'$ and so $G'[M]$ contains a cycle which is a contradiction. Hence, $v_i\not\in N[S\setminus \{u_i\}]$ and $S$ is a multicolor irredundant set for $(G,\mathcal{P})$.

	\paragraph{Parameterized reduction from MAMB to AMB.}
	This is the main part of the proof. Let the bipartite graph $G=(V,E)$ with the bipartition $(X,Y)$ and the $k$-partition $\mathcal{P}=(V_1,\ldots,V_k)$ of $V$ be an instance of MAMB. 
	For each $i\in[k]$, let $E_i$ be the set of all edges of $G$ with both endpoints in $V_i$. 
	We are going to construct a bipartite graph $G'=(V',E')$ and choose an integer $k'=11k^2-9k+(ck^2+1)(5k^2-3k)$, where $c$ is a constant that will be determined in the proof,  such that $G$ has a multicolor acyclic matching of size $k$ if and only if $G'$ contains an acyclic matching of size $k'$. First, note that for each $i\in[k]$, $V_i$ intersects both parts $X$ and $Y$, otherwise $(G,\mathcal{P})$ is a No instance. In the construction of $G'$, we use the following three gadgets (To illustrate the construction, see Figure~\ref{fig3} as an example).
	
	\begin{figure}
		\includegraphics[width=\textwidth,
		height=\textheight,
		keepaspectratio,
		trim=0 30 0 0pt]{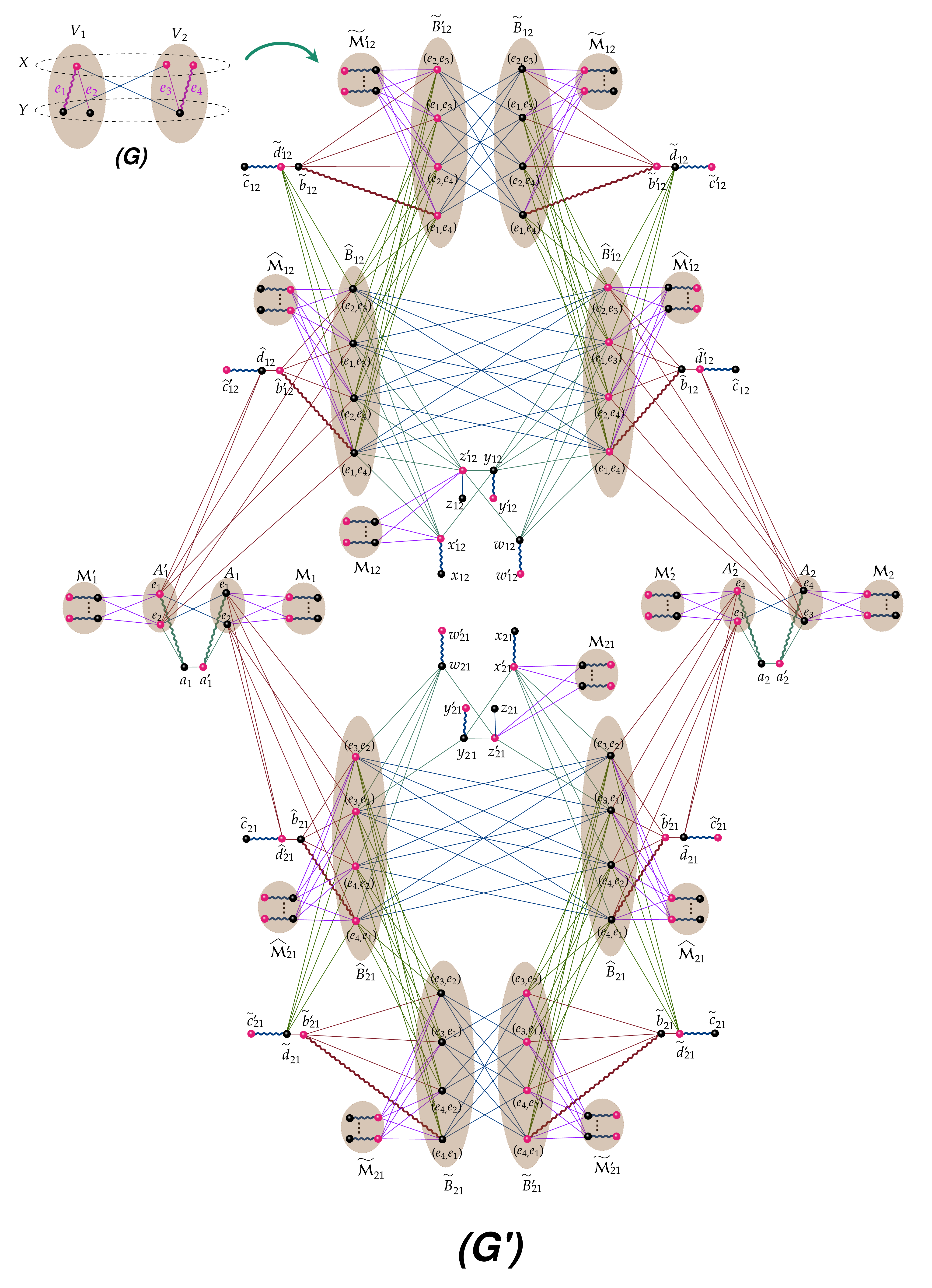}
		\caption{An example of the construction of $G'$ from $G$.}\label{fig3}
	\end{figure}
	\begin{itemize}
		\item $\mathcal{M}=\mathcal{M}(N,L)$ is an induced matching where $(N,L)$ is the bipartition of $V(\mathcal{M})$.
		\item $\mathcal{G}^1$ is the bipartite graph with the bipartition $(A\cup\{a\},A'\cup\{a'\})$ where $|A|=|A'|$. The vertices $a$ and $a'$ are respectively complete to $A'\cup \{a'\}$ and $A\cup\{a\} $ and  $A$ is anti-matched to $A'$.
		\item $\mathcal{G}^2$ is the bipartite graph with the bipartition $(B\cup\{b,c,d\},B'\cup\{b',c',d'\})$, where $|B|=|B'|$ and $f:B\to B'$ is a bijection. The vertices $b$ and $b'$ are respectively complete to $B'\cup \{d'\}$ and $B\cup\{d\} $, $c$ and $c'$ are respectively adjacent to $d'$ and $d$, and $B$ is anti-matched to $B'$ by $f$. 
		\item $\mathcal{G}^3$ is the bipartite graph comprised with the union of two disjoint copies of $\mathcal{G}^2$, say $\hat{\mathcal{G}}^2=\hat{\mathcal{G}}^2(\hat{B}\cup\{\hat{b},\hat{c},\hat{d}\},\hat{B}'\cup\{\hat{b}',\hat{c}',\hat{d}'\})$ with bijection $\hat{f}:\hat{B}\to \hat{B}'$  and $\tilde{\mathcal{G}}^2=\tilde{\mathcal{G}}^2(\tilde{B}\cup\{\tilde{b},\tilde{c},\tilde{d}\},\tilde{B}'\cup\{\tilde{b}',\tilde{c}',\tilde{d}'\})$  with bijection $\tilde{f}:\tilde{B}\to \tilde{B}'$ and the following additional edges. Let $f:\hat{B}\to \tilde{B}'$ be a bijection. Then, $\hat{B}$ is anti-matched to $\tilde{B}'$ by $f$ and $\tilde{B}$ is anti-matched to $\hat{B}'$ by the bijection $\hat{f}of^{-1}o\tilde{f}$. Also, $\tilde{d}$ and $\tilde{d}'$ are respectively complete to $\hat{B}'$ and $\hat{B}$.
		\item $\mathcal{G}^4$ is the bipartite graph on the vertex set $\{x,y,z,w,x',y',z',w'\}$ with edges $xx'$, $yy'$, $zz'$, $ww'$, $x'y$, $yz'$ and $z'w$.
		
	\end{itemize}
	
	It should be noted that in the following construction, for instance if $\mathcal{G}^1_i$ is a copy of $\mathcal{G}^1$, by abuse of notation, we use $A_i,A'_i, a_i,a'_i$ for the (sets of) vertices in $\mathcal{G}^1_i$ corresponding to $A,A',a,a'$.  
	The graph $G'$ is comprised with some vertex-disjoint copies of the above gadgets, where there are a number of edges between these gadgets, as defined in the following.

	First, for every part $V_i$ in $G$, set a disjoint copy $\mathcal{G}^1_i$ of $\mathcal{G}^1$, where $|A_i|=|A_i'|=|E_i|$ and  every vertex in $A_i$ and its counterpart in $A'_i$ are corresponding to an edge in $E_i$. 
	For each edge $e\in E_i$, let us call its corresponding vertices in $A_i$ and $A'_i$ by $[e,A_i]$ and $[e,A'_i]$ respectively. 
	
	Also, for every pair $(V_i,V_j)$, $i,j\in [k]$, $i\neq j$, let $\mathcal{G}^3_{ij}$ and $\mathcal{G}^4_{ij}$ be the disjoint copies of $\mathcal{G}^3$ and $\mathcal{G}^4$ respectively, where $|\hat{B}_{ij}|=|\hat{B}'_{ij}|=|\tilde{B}_{ij}|=|\tilde{B}'_{ij}|=|E_i\times E_j|$ and every vertex  $v\in \hat{B}_{ij}$ and its three counterparts $\hat{f}_{ij}(v)\in \hat{B}'_{ij}$, $f_{ij}(v)\in \tilde{B}'_{ij}$ and $\tilde{f}_{ij}^{-1}o f_{ij}(v)\in \tilde{B}_{ij}$ are corresponding to a pair $(e_i,e_j)$ for some $e_i\in E_i$ and $e_j\in E_j$.  To simplify notations, for each pair $(e_i,e_j)\in E_i\times E_j $, let us call its corresponding vertex in $\hat{B}_{ij}$ by $[(e_i,e_j),\hat{B}_{ij}]$. Similar notations are used for its corresponding vertices in $\hat{B}'_{ij},\tilde{B}_{ij}$ and $\tilde{B}'_{ij}$.
	
	Let $F$ be the disjoint union of all $\mathcal{G}^1_i$, $i\in[k]$, all $\mathcal{G}^3_{ij}$ and $\mathcal{G}^4_{ij}$,  $i,j\in [k]$, $i\neq j$ with the following additional edges: 
	For each $i,j\in [k]$, $i\neq j$, the connections between $\mathcal{G}^1_i$ and $\mathcal{G}^3_{ij}$ are as follows; $\hat{d}_{ij}$ and $\hat{d}'_{ij}$ are respectively complete to $A_i'$ and $A_j$. Also, every vertex $[e_j,A_j]$ in $A_j$ corresponding to the edge $e_j\in E_j$, is adjacent to all vertices $[(e,e'),\hat{B}'_{ij}]$ in $\hat{B}'_{ij}$ for every $e\in E_i$ and $e'\in E_j\setminus\{e_j\}$. Moreover, every vertex $[e_i,A'_i]$ in $A'_i$ corresponding to the edge $e_i\in E_i$, is adjacent to all vertices $[(e,e'),\hat{B}_{ij}]$ in $\hat{B}_{ij}$ for every $e\in E_i\setminus\{e_i\}$ and $e'\in E_j$. Now, for each $i,j\in [k]$, $i\neq j$, the connections between  $\mathcal{G}^3_{ij}$ and  $\mathcal{G}^4_{ij}$ are as follows; for every pair $(e_i,e_j)\in E_i\times E_j$, if the endpoint of $e_i$ in $X$ is adjacent to the endpoint of $e_j$ in $Y$, then the vertex $[(e_i,e_j),\hat{B}_{ij}]$ (resp. $[(e_i,e_j),\hat{B}'_{ij}]$) is adjacent to the vertices $x'_{ij}$ and $z'_{ij}$ (resp. $w_{ij}$ and $y_{ij}$) in $\mathcal{G}^4_{ij}$, otherwise, the vertex $[(e_i,e_j),\hat{B}_{ij}]$ (resp. $[(e_i,e_j),\hat{B}'_{ij}]$) is adjacent to the vertex $x'_{ij}$ (resp. $w_{ij}$) in $\mathcal{G}^4_{ij}$. There is no more edges in $F$.
	
	Now, let $G'$ be obtained from $F$ as follows. Let $\mathcal{M}$ be the gadget defined as above such that $|N|=|L|=ck^2+1$ (we will determine the constant $c$ shortly in Claim~\ref{cc}). For each $i\in [k]$, add two disjoint copies of $\mathcal{M}$, say $\mathcal{M}_i$ and  $\mathcal{M}'_i$, to $F$ and connect every vertex in $L_i$ (resp. $L'_i$) to $A_i$ (resp. $A'_i$). Also, for each $i,j\in [k]$, $i\neq j$, add five disjoint copies of $\mathcal{M}$, say $\hat{\mathcal{M}}_{ij}$, $\hat{\mathcal{M}}'_{ij}$, $\tilde{\mathcal{M}}_{ij}$, $\tilde{\mathcal{M}}'_{ij}$ and $\mathcal{M}_{ij}$,  to $F$ and connect every vertex in $\hat{L}_{ij}$, $\hat{L}'_{ij}$, $\tilde{L}_{ij}$, $\tilde{L}'_{ij}$ and $L_{ij}$ respectively to every vertex in $\hat{B}_{ij}$, $\hat{B}'_{ij}$, $\tilde{B}_{ij}$, $\tilde{B}'_{ij}$ and $\{x'_{ij},z'_{ij}\}$.

	Now, suppose that $G$ contains a multicolor acyclic matching $M=\{e_1,\ldots,e_k\}$, where $e_i\in E_i$, for each $i\in[k]$. Now, we construct an acyclic matching of size $k'$ in $G'$. Define the following subsets of $E(G')$.
	
	\begin{equation}\label{Mi}
	\begin{split}
	M_1 &= \{\{a'_i, [e_i,A_i]\}, \{a_i, [e_i,A'_i]\} \ :\  i\in[k] \},\\
	M_2 &= \{\{\hat{d}_{ij}, \hat{c}'_{ij}\}, \{\hat{d}'_{ij}, \hat{c}_{ij}\},
	\{\hat{b}_{ij}, [(e_i,e_j),\hat{B}'_{ij})]\}, \{\hat{b}'_{ij}, [(e_i,e_j),\hat{B}_{ij})]\} \ :\  i,j\in[k], i\neq j \},\\
	M_3 &= \{
	\{\tilde{d}_{ij}, \tilde{c}'_{ij}\}, \{\tilde{d}'_{ij}, \tilde{c}_{ij}\}, \{\tilde{b}_{ij}, [(e_i,e_j),\tilde{B}'_{ij})]\}, \{\tilde{b}'_{ij}, [(e_i,e_j),\tilde{B}_{ij})]\} \ : \ i,j\in[k], i\neq j \},\\
	M_4 &= \{\{x'_{ij},x_{ij}\}, \{y'_{ij},y_{ij}\}, \{w'_{ij},w_{ij}\}  \ : \ i,j\in[k], i\neq j  \}.
	\end{split}
	\end{equation}
	
	Now, let $M'$ be the set of edges obtained from  $M_1\cup M_2 \cup M_3\cup M_4 $ by adding all edges in $\mathcal{M}_{i}$, $\mathcal{M}'_{i}$, $\mathcal{M}_{ij}$, $\hat{\mathcal{M}}_{ij}$, $\hat{\mathcal{M}}'_{ij}$, $\tilde{\mathcal{M}}_{ij}$ and $\tilde{\mathcal{M}}'_{ij}$, $i,j\in[k]$, $i\neq j$. Thus,
	\[
	|M'|=2k+11k(k-1)+(ck^2+1)(2k+5k(k-1))=k'.
	\] 
	\begin{figure}[t]
		\includegraphics[width=\textwidth,
		height=\textheight,
		keepaspectratio,
		trim=0 0 0 0pt]{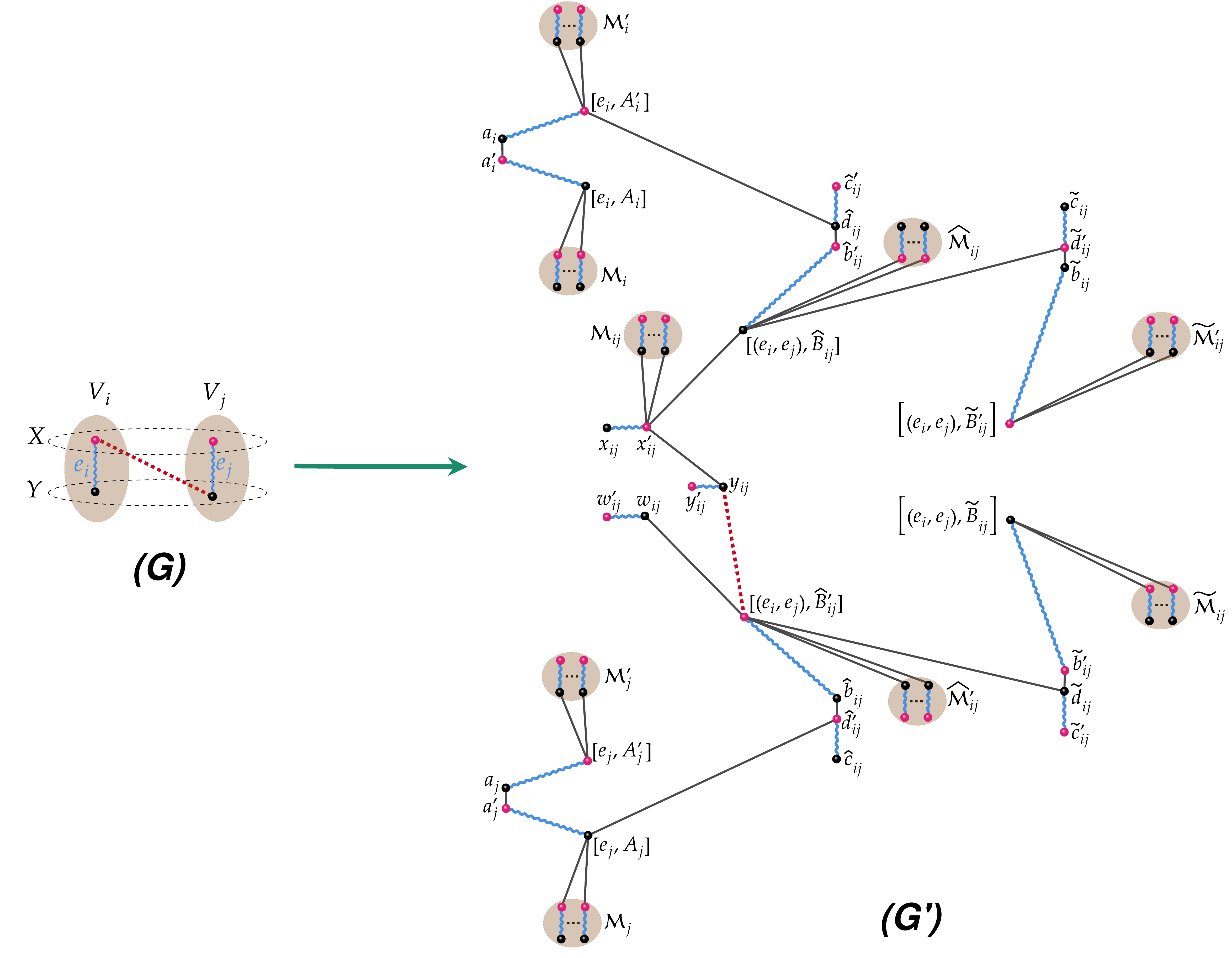}
		\caption{How paths in $G$ can be translated to paths in $G'$. } \label{fig}
	\end{figure}
	For every $i,j\in [k]$, $i\neq j$, as you might see in Figure~\ref{fig}, if the endpoint of $e_i$ in $X$ is connected to the endpoint of $e_j$ in $Y$, then there is a path from $[e_i,A'_i]$ to $[e_j,A_j]$ in $G'$. Thus, every cycle in $G'[M']$ is evidently corresponding to a cycle  in $G[M]$ and vice versa. Thus, $M'$ is an acyclic matching in $G'$.
	
	Now, suppose that $M'$ is the maximum acyclic matching of size at least $k'$. First, we prove some claims. Let
	
	\[\overline{\mathcal{M}}:=\cup_{i\in [k]} (\mathcal{M}_i\cup \mathcal{M}'_i) \cup_{i\neq j\in [k]} (\hat{\mathcal{M}}_{ij}\cup \hat{\mathcal{M}}'_{ij}\cup\tilde{\mathcal{M}}_{ij}\cup \tilde{\mathcal{M}}'_{ij}\cup \mathcal{M}_{ij}). \]
	
	\begin{claim} \label{cc}
		Let $G''$ be the graph obtained from $G'$ by deleting all vertices in $\overline{\mathcal{M}}$. Then there is a constant $c$ such that $\MAM(G'')\leq ck^2$.
	\end{claim}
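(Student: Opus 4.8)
First I would dispense with the cosmetics: since $G'$ was obtained from $F$ by adjoining all the padding copies of $\mathcal M$ and $\overline{\mathcal M}$ is by definition the union of (the vertex sets of) exactly those copies, deleting the vertices of $\overline{\mathcal M}$ from $G'$ returns $F$. Thus $G''=F$, and the whole claim reduces to showing $\MAM(F)\le ck^2$ for an absolute constant $c$. The plan is to split $V(F)$ into the collection $\mathcal A$ of the ``large, (near) anti-matched'' sets, namely $\bigcup_{i}(A_i\cup A'_i)\cup\bigcup_{i\neq j}(\hat B_{ij}\cup\hat B'_{ij}\cup\tilde B_{ij}\cup\tilde B'_{ij})$, and the set $\mathcal S$ of all remaining vertices of $F$ --- the $O(1)$ ``special'' vertices of each gadget ($a_i,a'_i$ of every $\mathcal G^1_i$; the six vertices $\hat b_{ij},\hat b'_{ij},\hat c_{ij},\hat c'_{ij},\hat d_{ij},\hat d'_{ij}$ and their tilde counterparts of every $\mathcal G^3_{ij}$; and all eight vertices of every $\mathcal G^4_{ij}$). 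Since there are $k$ gadgets of the first type and $k(k-1)$ of each of the other two types, $|\mathcal S|=O(k^2)$. Given an acyclic matching $M$ of $F$, write $M=M_{\mathcal S}\cup M_{\mathcal A}$ where $M_{\mathcal S}$ consists of the edges of $M$ having an endpoint in $\mathcal S$ and $M_{\mathcal A}$ of those with both endpoints in $\mathcal A$. Then $|M_{\mathcal S}|\le|\mathcal S|=O(k^2)$ because $M$ is a matching, while $M_{\mathcal A}$ is an acyclic matching of the induced subgraph $F[\mathcal A]$ (an induced subgraph of the forest $F[V(M)]$ is a forest). Hence $\MAM(F)\le|\mathcal S|+\MAM(F[\mathcal A])$, and everything comes down to bounding $\MAM(F[\mathcal A])$.

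For the latter I would argue that $F[\mathcal A]$ is the edge-disjoint union of $O(k^2)$ ``pieces'', each a bipartite graph sitting between two of the large sets: the pieces $(A_i,A'_i)$ for $i\in[k]$; the pieces $(A_j,\hat B'_{ij})$ and $(A'_i,\hat B_{ij})$ for $i\neq j$; and the pieces $(\hat B_{ij},\hat B'_{ij})$, $(\tilde B_{ij},\tilde B'_{ij})$, $(\hat B_{ij},\tilde B'_{ij})$, $(\tilde B_{ij},\hat B'_{ij})$ for $i\neq j$. (This is read off directly from the construction of $F$: all large sets are stable, and every edge of $F$ with both ends in $\mathcal A$ joins one of these pairs.) In each piece at least one side has the property that every one of its vertices has at most one non-neighbour on the other side: the five piece-types lying inside a single $\mathcal G^1_i$ or $\mathcal G^3_{ij}$ are anti-matchings, and in $(A_j,\hat B'_{ij})$ the vertex $[(e,e'),\hat B'_{ij}]$ is adjacent to every $[e_j,A_j]$ with $e_j\neq e'$, so has exactly one non-neighbour in $A_j$ (symmetrically for $(A'_i,\hat B_{ij})$). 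Now if $\{p_1q_1,\dots,p_Nq_N\}$ is a matching contained in one piece, with the $q_\ell$ chosen on the ``good'' side, and $F[\{p_1,\dots,p_N,q_1,\dots,q_N\}]$ is a forest, then that forest has at most $2N-1$ edges, whereas each $q_\ell$ contributes at least $N-1$ of them, so $N(N-1)\le 2N-1$, forcing $N\le 2$; this is exactly the counting behind Proposition~\ref{lem6}. Therefore an acyclic matching of $F[\mathcal A]$ meets each piece in at most two edges, so $\MAM(F[\mathcal A])\le 2\cdot(\#\text{pieces})=O(k^2)$, and combining with the previous paragraph gives $\MAM(F)\le ck^2$ for a concrete constant $c$ (one may take, say, $c=32$; we do not optimise it).

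The proof is conceptually short, so the main obstacle is purely the bookkeeping: one has to verify carefully, from the rather involved definition of $F$, that $F[\mathcal A]$ really does decompose into exactly the pieces listed above and that each ``cross'' piece has the one-non-neighbour property on one of its sides. One must also treat the degenerate cases $|E_i|\le 1$, in which some large sets are empty or singletons --- there the associated pieces and gadgets are trivial and contribute nothing, so the bound only improves. Apart from this, the argument is just the observation, generalising Proposition~\ref{lem6} via the edge count of a forest, that a bipartite graph in which one side is ``almost complete'' to the other admits no acyclic matching of size $3$.
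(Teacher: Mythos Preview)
Your proposal is correct and follows essentially the same route as the paper: split off the $O(k^2)$ special vertices, then decompose the edges among the large sets $A_i,A'_i,\hat B_{ij},\hat B'_{ij},\tilde B_{ij},\tilde B'_{ij}$ into $O(k^2)$ bipartite pieces, each carrying an acyclic matching of size at most $2$. The only cosmetic difference is that the paper bounds each piece by recognising it as an anti-matching or a blow-up of one and invoking Propositions~\ref{lem6} and~\ref{lem7}, whereas you run the forest edge-count $N(N-1)\le 2N-1$ directly; the two arguments are equivalent.
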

	
	To see the claim, let $G'''$ be the induced subgraph of $G'$ on the vertices $\cup_i (A_i\cup A'_i)\cup_{i\neq j} (\hat{B}_{ij}\cup \hat{B}'_{ij}\cup \tilde{B}_{ij}\cup \tilde{B}'_{ij})$. Then $E(G''')$ is partitioned into the following set of edges,
	\begin{align*}
	&E(A_i,A'_i), i\in [k],\\
	&E(A'_i,\hat{B}_{ij}), E(A_i,\hat{B}'_{ji}), E(\hat{B}_{ij},\hat{B}'_{ij}), E(\hat{B}'_{ij},\tilde{B}_{ij}), E(\tilde{B}_{ij},\tilde{B}'_{ij}), E(\tilde{B}'_{ij},\hat{B}_{ij}), i,j\in[k], i\neq j   
	\end{align*}
	and the subgraph induced on each of these sets is either an anti-matching or a blow-up of an anti-matching whose maximum acyclic matching is at most two by Propositions~\ref{lem6} and \ref{lem7}. Thus, $\MAM(G''')=O(k^2)$. Also, $G''$ is obtained from $G'''$ by adding $O(k^2)$ vertices. Hence, there exists some constant $c$ such that $\MAM(G'')\leq ck^2$.
	This proves Claim~1.
	
	\begin{claim}\label{M}
		The matching $M'$ contains all edges in  $\overline{\mathcal{M}}$. 
	\end{claim}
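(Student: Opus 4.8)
The plan is to prove the slightly stronger statement that some maximum acyclic matching of $G'$ of size at least $k'$ contains $\overline{\mathcal M}$; to get the claim verbatim one simply replaces $M'$ by a maximum acyclic matching that, among all maximum acyclic matchings of $G'$ of size at least $k'$, also maximizes the number of edges it shares with $\overline{\mathcal M}$. The starting observation is that every edge $n_s\ell_s$ of every copy of $\mathcal M$ inside $\overline{\mathcal M}$ is a pendant edge of $G'$: each $N$-side vertex of such a copy has degree one, since the only edges incident to it are the matching edges of that copy. So suppose for contradiction that some edge $n_s\ell_s\in\overline{\mathcal M}$ is not in $M'$; then $n_s$ is unsaturated by $M'$. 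Write $A$ for the ``core set'' of the copy containing $\ell_s$, i.e.\ the one among $A_i$, $A'_i$, $\hat B_{ij}$, $\hat B'_{ij}$, $\tilde B_{ij}$, $\tilde B'_{ij}$, $\{x'_{ij},z'_{ij}\}$ to which $\ell_s$ is attached, so that $N_{G'}(\ell_s)=A\cup\{n_s\}$.

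First I would dispose of the case that $\ell_s$ is saturated, say by an edge $f\in M'$ (this includes the possibility that $\ell_s$ is covered by an edge going from $L$ to $A$). Replace $f$ by $n_s\ell_s$. Since $n_s$ has degree one in $G'$, the graph $G'[(M'\setminus\{f\})\cup\{n_s\ell_s\}]$ is obtained from the forest $G'[M']$ by deleting the endpoint of $f$ other than $\ell_s$ (and the edges it carried) and then re-attaching $\ell_s$ as a leaf through $n_s$; hence it is again a forest. The size is unchanged while the number of edges shared with $\overline{\mathcal M}$ has gone up by one, contradicting the choice of $M'$.

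The remaining case is that both $\ell_s$ and $n_s$ are unsaturated, and this is the heart of the argument. Then $M'\cup\{n_s\ell_s\}$ is a strictly larger matching, so by maximality of $M'$ it must contain a cycle; this cycle avoids the leaf $n_s$, hence passes through $\ell_s$, so $\ell_s$ has two neighbours lying in a common tree of $G'[M']$. Since $N_{G'}(\ell_s)\setminus\{n_s\}=A$, this forces $|A\cap V(M')|\ge 2$. Now split according to how many edges of the copy containing $\ell_s$ already lie in $M'$. If at least two do, their two $L$-endpoints together with two vertices of $A\cap V(M')$ induce a $C_4$ in $G'[M']$ (both $L$-endpoints are complete to $A$), contradicting acyclicity of $M'$; hence in fact $|A\cap V(M')|\le 1$, a contradiction. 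Otherwise at most one edge of this copy is in $M'$; since the copy has $ck^2+1$ edges and (again by the $C_4$ argument) at most one of its $L$-vertices can be covered by an edge leaving the copy, the copy has at least $ck^2-1$ ``free'' edges $n_{s'}\ell_{s'}$ with both endpoints unsaturated. On the other hand, the anti-matching and blow-up structure of the gadgets $\mathcal G^1,\mathcal G^2,\mathcal G^3$ gives $|A\cap V(M')|=O(k)$: the edges of $M'$ incident to $A$ fall into a bounded number of classes, each contained in an anti-matching or in a blow-up of one, and Propositions~\ref{lem6} and \ref{lem7} bound each class (this is a localized version of Claim~\ref{cc}). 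Therefore, taking the constant $c$ large enough, we may delete from $M'$ the at most $|A\cap V(M')|$ edges meeting $A$ and then add $|A\cap V(M')|+1$ free edges $n_{s'}\ell_{s'}$ of the copy; after the deletion no such $\ell_{s'}$ has a neighbour left in $V(M')$, so each added edge becomes a separate component and the result is acyclic, while its size exceeds $|M'|$ --- contradicting maximality of $M'$. Hence no edge of $\overline{\mathcal M}$ is missing from $M'$, which is the claim.

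I expect the last paragraph to be the main obstacle, and within it specifically the bound $|A\cap V(M')|=O(k)$ uniformly over all gadget core sets $A$: verifying it means going through each gadget type and invoking Propositions~\ref{lem6}--\ref{lem7} in the spirit of the proof of Claim~\ref{cc}, and it is exactly this estimate that dictates how large the constant $c$ (equivalently, the size $ck^2+1$ of the copies of $\mathcal M$) must be.
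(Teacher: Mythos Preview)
Your argument is correct, but it diverges from the paper's in how the hardest case is disposed of. The paper proceeds \emph{globally}: it first uses Claim~\ref{cc} to show that every copy of $\mathcal{M}$ must have at least two edges of $M'$ incident to it (otherwise, counting edges of $M'$ as either lying inside $G''$---at most $ck^2$ by Claim~\ref{cc}---or touching some $L$-set---at most $ck^2+1$ per copy---gives $|M'|<k'$). With two $L$-vertices of each copy already in $V(M')$, the $C_4$ argument immediately yields $|A\cap V(M')|\le 1$ for every core set $A$, after which every missing pendant edge can be swapped in or simply added. In particular, the sub-case you flag as the ``main obstacle'' (at most one edge of the copy in $M'$ while $|A\cap V(M')|\ge 2$) is shown to be vacuous by the global count, so the paper never needs your local bound $|A\cap V(M')|=O(k)$.

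Your local route works too, with two small corrections to the justification in that last paragraph. First, for $A=A_i$ or $A=A'_i$ the edge-classes incident to $A$ are not $O(1)$ in number but $O(k)$: $A_i$ meets $\hat{B}'_{ji}$ and $\hat{d}'_{ji}$ for every $j\ne i$. Since each class still contributes at most two edges (by Propositions~\ref{lem6}--\ref{lem7}, or trivially), the bound $|A\cap V(M')|=O(k)$ survives, and that is all your swap needs. Second, not every class is literally an anti-matching or a blow-up of one---$E(A_i,L_i)$ is complete bipartite and $E(A_i,\{a'_i\})$ is a star---but those are bounded by the same $C_4$ observation or by size one. The paper's approach is shorter because it spends Claim~\ref{cc} once; yours is longer but makes more explicit exactly what the gadget size $ck^2+1$ is buying.
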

	
	To see the claim, first note that $M'$ should contains at least two edges from each of the graphs $\mathcal{M}_i, \mathcal{M}'_i$, $i\in [k]$, $\hat{\mathcal{M}}_{ij}$, $\hat{\mathcal{M}}'_{ij}$, $\tilde{\mathcal{M}}_{ij}$, $\tilde{\mathcal{M}}'_{ij}$, $\mathcal{M}_{ij}$,  $i, j\in [k]$, $i\neq j$. For the contrary, suppose that $M'$ contains at most one edge from say $\mathcal{M}_i$. Then, $|M'|\leq \MAM(G'')+(5k^2-3k)(ck^2+1)-ck^2 \leq (5k^2-3k)(ck^2+1)$, where the last inequality is due to Claim~\ref{cc}. Thus, $|M|<k'$ which is a contradiction. 
	So, for each of the matching gadgets in $\overline{\mathcal{M}}$ say $\mathcal{M}_i$, $M'$ contains at least two edges in $\mathcal{M}_i$. Now, every two vertices in $L_i$ have the same neighbors in $G''$, thus, since $M'$ is an acyclic matching, every two vertices in $L_i$ have at most one neighbor in $V(G'')\cap V(M')$. Also, every edge in $M'$ between $V(\overline{\mathcal{M}})$ and $V(G'')$ can be replaced with an edge in $\overline{\mathcal{M}}$. Therefore, we may assume that $M'$ contains all edges in $\overline{\mathcal{M}}$. This proves the claim.

	\begin{claim}\label{leaf}
		The intersection of $V(M')$ with each of the sets $A_i$, $A'_i$, $\hat{B}_{ij}$, $\hat{B}'_{ij}$, $\tilde{B}_{ij}$, $\tilde{B}'_{ij}$ and $\{x'_{ij}, z'_{ij}\}$, $i,j\in[k], i\neq j$, has size at most one. 
	\end{claim}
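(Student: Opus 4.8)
The plan is to handle each of the relevant vertex sets by the same mechanism: each of $A_i$, $A'_i$, $\hat{B}_{ij}$, $\hat{B}'_{ij}$, $\tilde{B}_{ij}$, $\tilde{B}'_{ij}$ is attached, via a complete bipartite connection, to the $L$-side of one of the induced matching gadgets $\mathcal{M}_i$, $\mathcal{M}'_i$, $\hat{\mathcal{M}}_{ij}$, $\hat{\mathcal{M}}'_{ij}$, $\tilde{\mathcal{M}}_{ij}$, $\tilde{\mathcal{M}}'_{ij}$ respectively (and $\{x'_{ij},z'_{ij}\}$ is attached to $L_{ij}$). By Claim~\ref{M}, $M'$ contains \emph{all} edges of $\overline{\mathcal{M}}$, so in particular for the gadget $\mathcal{M}_i$ with bipartition $(N_i,L_i)$, every vertex of $L_i$ is saturated by $M'$ and matched to its private partner in $N_i$. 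Since $|L_i|=ck^2+1\geq 2$, there are at least two vertices of $L_i$ in $V(M')$. Now suppose for contradiction that $V(M')$ meets $A_i$ in two distinct vertices $u,u'$; since every vertex of $L_i$ is complete to $A_i$, the four vertices $u,u',$ together with any two vertices $\ell,\ell'\in L_i\cap V(M')$ induce a $C_4$ in $G'[V(M')]$ (using the edges $u\ell,\ell u',u'\ell',\ell' u$), contradicting that $M'$ is acyclic. Hence $|V(M')\cap A_i|\le 1$, and the identical argument applies verbatim to each of $A'_i,\hat{B}_{ij},\hat{B}'_{ij},\tilde{B}_{ij},\tilde{B}'_{ij}$ using the matching gadget complete to it, and to $\{x'_{ij},z'_{ij}\}$ using $L_{ij}$ (which is complete to both $x'_{ij}$ and $z'_{ij}$, so if both lay in $V(M')$ we would again get a $C_4$ through any two vertices of $L_{ij}\cap V(M')$).

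The only point needing care is that the two "extra" vertices $\ell,\ell'$ of the matching gadget are genuinely present in $V(M')$ and genuinely adjacent in $G'$ to the candidate pair from $A_i$ (or $\hat{B}_{ij}$, etc.). Presence follows from Claim~\ref{M} as noted above; adjacency follows directly from the construction, where every vertex of $L_i$ was connected to every vertex of $A_i$ (and likewise for the primed and hatted/tilded versions, and for $L_{ij}$ versus $\{x'_{ij},z'_{ij}\}$). One should also note that $\ell$ and $\ell'$ are nonadjacent (they lie on the same side $L_i$ of a bipartite gadget), so the four-cycle $u\ell u'\ell'$ is a genuine induced or at least spanning $C_4$ in $G'[V(M')]$ — but for the acyclicity contradiction it suffices that $G'[V(M')]$ merely \emph{contains} a cycle, which it does.

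I do not expect a real obstacle here: the statement is a direct twin-argument of exactly the flavour already used in Propositions~\ref{lem6} and~\ref{lem7} and in the proof of Claim~\ref{M}, where several vertices sharing a common neighbourhood force a short cycle as soon as two of them are saturated. The mild bookkeeping is simply checking that the seven listed sets are covered case by case by the seven matching gadgets $\mathcal{M}_i$, $\mathcal{M}'_i$, $\hat{\mathcal{M}}_{ij}$, $\hat{\mathcal{M}}'_{ij}$, $\tilde{\mathcal{M}}_{ij}$, $\tilde{\mathcal{M}}'_{ij}$, $\mathcal{M}_{ij}$, each complete to the corresponding set by construction, and invoking Claim~\ref{M} once for each.
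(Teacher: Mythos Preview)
Your proposal is correct and follows essentially the same approach as the paper: invoke Claim~\ref{M} to ensure all of $\overline{\mathcal{M}}$ (hence at least two vertices of each $L$-side) lies in $V(M')$, then observe that two vertices in, say, $A_i$ together with two vertices of $L_i$ span a $C_4$ because $L_i$ is complete to $A_i$. The paper's own proof is a terse one-sentence version of exactly this argument.
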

	
	To see the claim, first note that by Claim~\ref{M}, $M'$ contains all edges in $\overline{\mathcal{M}}$. So, if $V(M')$ contains two vertices in one of the above sets, say $A_i$, then since $A_i$ is complete $L_i$, $G'$ induces a cycle $C_4$ on $V(M')$ which is a contradiction. This proves the claim.\\
	
	We partition the edge set of $G''$ into the following four sets.
	
	\begin{align*}
	E'_1&:=\bigcup_{i\in [k]} E(A_i\cup\{a_i\},A'_i\cup\{a_i'\}),\\
	E'_2&:= \bigcup_{i\neq j\in [k]} E( \hat{B}_{ij}\cup \{\hat{d}_{ij}\},A_i'\cup \hat{B}'_{ij} \cup \{\hat{c}'_{ij},\hat{b}'_{ij}\} ) \cup E(A_j\cup \{\hat{c}_{ij},\hat{b}_{ij}\}, \hat{B}'_{ij}\cup \{\hat{d}'_{ij}\})\\
	E_3'&:=\bigcup_{i\neq j\in [k]} E(\hat{B}_{ij},\tilde{B}'_{ij}\cup\{\tilde{d}'_{ij}\})\cup E(\tilde{B}_{ij}\cup\{\tilde{d}_{ij}\},\hat{B}'_{ij})\cup E(\tilde{B}_{ij}\cup\{\tilde{b}_{ij},\tilde{c}_{ij},\tilde{d}_{ij}\},\tilde{B}'_{ij}\cup\{\tilde{b}'_{ij},\tilde{c}'_{ij},\tilde{d}'_{ij}\})\\
	E_4'&:= \bigcup_{i\neq j\in [k]} E(\hat{B}_{ij}\cup\{x_{ij},y_{ij},z_{ij},w_{ij}\},\{x'_{ij},y'_{ij},z'_{ij},w'_{ij}\}) \cup E(\{y_{ij},w_{ij}\},\hat{B}'_{ij})
	\end{align*}
	
	Due to Claims~\ref{M} and \ref{leaf}, one can easily see that 
	\begin{align*}
	|M'\cap  E'_1|&\leq 2k,\\
	|M'\cap  E'_2|&\leq 4k(k-1),\\
	|M'\cap  E'_3|&\leq 4k(k-1),\\
	|M'\cap  E'_4|&\leq 3k(k-1).
	\end{align*}
	
	By Claim~\ref{M}, the number of edges in $M'$ with both endpoints in $V(G'')$ is at least $11k^2-9k$. Thus, equality holds in all above inequalities. On the other hand, the only acyclic matching of size $2k$ in $E'_1$ is of the form $M_1$ defined in \eqref{Mi}, for some edges $e_i\in E_i$, $i\in [k]$. So, by Claim~\ref{leaf}, we have 
	\begin{align*}
	M'\cap E'_1 &= \{\{a'_i, [e_i,A_i]\}, \{a_i, [e_i,A'_i]\} \ :\  i\in[k] \},\\
	M'\cap E'_2&= \{\{\hat{d}_{ij}, \hat{c}'_{ij}\}, \{\hat{d}'_{ij}, \hat{c}_{ij}\},
	\{\hat{b}_{ij}, \hat{u}'_{ij}\}, \{\hat{b}'_{ij}, \hat{u}_{ij}\} \ :\  i,j\in[k], i\neq j \},\\
	M'\cap E'_3&=
	\{
	\{\tilde{d}_{ij}, \tilde{c}'_{ij}\}, \{\tilde{d}'_{ij}, \tilde{c}_{ij}\}, \{\tilde{b}_{ij}, \tilde{v}'_{ij}\}, \{\tilde{b}'_{ij}, \tilde{v}_{ij}\} \ : \ i,j\in[k], i\neq j \},
	\end{align*}
	for some vertices $\hat{u}_{ij}\in \hat{B}_{ij}$, $\hat{u}'_{ij}\in \hat{B}'_{ij}$, $\tilde{v}_{ij}\in \tilde{B}_{ij}$ and $\tilde{v}'_{ij}\in \tilde{B}'_{ij}$,  $i,j\in[k], i\neq j$.
	Now, we are going to prove that $\hat{u}_{ij}=[(e_i,e_j),\hat{B}_{ij}]$, $\hat{u}'_{ij}=[(e_i,e_j),\hat{B}'_{ij}]$, $\tilde{v}_{ij}=[(e_i,e_j),\tilde{B}_{ij}]$ and $\tilde{v}'_{ij}=[(e_i,e_j),\tilde{B}'_{ij}]$.

	Let $\hat{u}_{ij}=[(e_i^*,e_j^*),\hat{B}_{ij}]$ and $\hat{u}'_{ij}=[(e_i^\#,e_j^\#),\hat{B}'_{ij}]$. Note that $e_i^*=e_i$, since otherwise $G'$ induces the cycle $C_4$ on vertices $[e_i,A_i'],\hat{d}_{ij}, \hat{b}_{ij}', \hat{u}_{ij} $. Similarly, we have $e_j^\#=e_j$. Also, note that $\tilde{v}_{ij}= [(e_i^\#,e_j),\tilde{B}_{ij}]$, since  otherwise $G'$ induces the cycle $C_4$ on  vertices $\hat{u}'_{ij}, \tilde{v}_{ij}, \tilde{b}_{ij}', \tilde{d}_{ij}$. Similarly, we have 
	$\tilde{v}'_{ij}= [(e_i,e_j^*),\tilde{B}'_{ij}]$. Now, we can see that $e_i=e_i^\#$ and $e_j=e_j^*$, since otherwise  $G'$ induces the cycle $C_8$ on  vertices $\hat{u}_{ij}, \tilde{d}'_{ij}, \tilde{b}_{ij},\tilde{v}'_{ij},\tilde{v}_{ij}, \tilde{b}_{ij}', \tilde{d}_{ij}, \hat{u}'_{ij}$. Thus, $ M'\cap E'_2=M_2$ and $ M'\cap E'_3=M_3$ as defined in \eqref{Mi}.
	
	Now, we prove that $M'\cap E'_4=M_4$ as defined in \eqref{Mi}. First, note that by Claim~\ref{leaf} and the fact that $|M'\cap  E'_4|= 3k(k-1)$, for each $i,j\in [k], i\neq j$, we have $\{y_{ij},y'_{ij}\}, \{w_{ij},w'_{ij}\} \in M'\cap E'_4$ and we have either $\{x'_{ij},x_{ij}\}$ or $\{z'_{ij},z_{ij}\}$ are in $M'\cap E'_4$. Suppose that $\{z'_{ij},z_{ij}\}\in M'$. We show that one can replace $\{x'_{ij},x_{ij}\}$ with $\{z'_{ij},z_{ij}\}$ in $M'$ and it remains an acyclic matching. To see this, suppose that the new $M'$ contains a cycle $C$ including $x'_{ij}$, then $C$ should contain the subpath $P= \hat{u}_{ij}, x'_{ij},y_{ij}, \hat{u}'_{ij} $. Thus,  $y_{ij}$ is adjacent to $\hat{u}'_{ij}$ and then $z'_{ij}$ is also adjacent to $\hat{u}_{ij}$ and therefore we can replace the subpath $P$ in $C$ with $ \hat{u}_{ij}, z'_{ij},y_{ij}, \hat{u}'_{ij} $. So, $G'$ induces a cycle on $V(M')$ which is a contradiction. This shows that $M'\cap E'_4=M_4$. Hence, $M'=M_1\cup M_2\cup M_3\cup M_4$.
	
	Now, we claim that $M=\{e_1,\ldots, e_k\}$ is a multicolor acyclic matching  in $G$. It is clear that $M$ is a multicolor matching, since $e_i\in E_i$ for each $i\in [k]$. Also, as you see in the argument prior to Claim~\ref{cc}, every cycle in $G[M]$ is corresponding to a cycle  in $G'[M']$ and vice versa. Thus, $M$ is an acyclic matching in $G'$.
\end{proof}

For the last result of this section, we investigate the \textsc{Acyclic Matching} problem on line graphs. First, we need the following lemma.

%
\begin{lemma}\label{lem5}
	Let $H$ be a graph and $G$ be the line graph of $H$. Then, $G$ has an acyclic matching of size $k$ if and only if $H$ contains $t$ vertex-disjoint paths $P_1,\ldots,P_t$, for some positive integer $t$, such that every $P_i$ has an even positive length and $\sum_{i=1}^{t}length(P_i)=2k$.
\end{lemma}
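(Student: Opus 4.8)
The plan is to set up a dictionary between the two sides through the edge-subgraph of $H$ singled out by $M$, and to reduce everything to one elementary fact: the line graph of a graph is a forest precisely when each of its connected components is a path (a vertex of degree $\geq 3$ forces a triangle in the line graph, a cycle $C_n$ forces a $C_n$, and the line graph of a path on $m$ vertices is a path on $m-1$ vertices).

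For the forward direction, given an acyclic matching $M$ of size $k$ in $G=L(H)$, I would let $H'$ be the subgraph of $H$ whose edge set is $V(M)$ — the $2k$ edges of $H$ that $M$ saturates — together with only their endpoints as vertices. The first observation is the identity $G[M]=L(H')$: two edges lying in $E(H')$ are adjacent in $L(H)$ iff they share an endpoint, which is exactly the adjacency relation of $L(H')$. Since $M$ is a matching of $L(H)$ saturating precisely $V(M)=V(L(H'))$, it is in fact a perfect matching of $L(H')=G[M]$. Because $M$ is acyclic, $L(H')$ is a forest, so by the fact above $H'$ is a disjoint union of paths $P_1,\dots,P_t$, which are automatically vertex-disjoint in $H$ (distinct components of $H'$ share no vertex). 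The restriction of $M$ to each $L(P_i)$ is a perfect matching of $L(P_i)$, and $L(P_i)$ is a path on $\mathrm{length}(P_i)$ vertices, so $\mathrm{length}(P_i)$ is even; it is positive since $P_i$ is a genuine, edge-bearing component of $H'$. Finally $\sum_i \mathrm{length}(P_i)=|E(H')|=2k$.

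For the converse, given vertex-disjoint paths $P_1,\dots,P_t$ in $H$, each of positive even length, with $\sum_i\mathrm{length}(P_i)=2k$, I would put $H'=\bigcup_i P_i$. Vertex-disjointness of the $P_i$ guarantees $L(H')=\bigsqcup_i L(P_i)$ (no edges run between the parts), each $L(P_i)$ is a path on an even number of vertices and hence has a perfect matching $M_i$, and $M:=\bigcup_i M_i$ is a matching in $L(H')\subseteq G$ with $|M|=\sum_i\mathrm{length}(P_i)/2=k$. Since $V(M)=E(H')$ we again have $G[M]=L(H')$, a disjoint union of paths, which is acyclic; so $M$ is an acyclic matching of size $k$ in $G$.

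Nothing here is deep: the only genuine lemma is the forest-characterization of line graphs, and the only steps needing a little care are the identity $G[M]=L(H')$ (the induced subgraph of a line graph on a set $S$ of edges is the line graph of the edge-subgraph carried by $S$) and the parity bookkeeping $\mathrm{length}(P_i)=|V(L(P_i))|$, which is exactly what converts ``even length'' into ``admits a perfect matching''. I expect this bookkeeping, rather than any single hard idea, to be the part that must be written carefully.
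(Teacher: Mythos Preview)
Your proof is correct and follows essentially the same arc as the paper's: both directions hinge on the fact that $G[M]$ is a disjoint union of odd-length paths, and on translating these back to even-length paths in $H$. The one notable difference is in the forward direction: the paper argues that $G[M]$ has maximum degree at most two by invoking Beineke's characterization (line graphs contain no induced $K_{1,3}$), whereas you bypass this by first writing down the identity $G[M]=L(H')$ for the edge-subgraph $H'$ of $H$ on $V(M)$, and then applying the elementary fact that a line graph is a forest iff the base graph is a disjoint union of paths. Your route is slightly more self-contained --- it needs no external characterization of line graphs --- and it makes the passage from $G$ back to $H$ explicit from the outset rather than at the end. Otherwise the bookkeeping (perfect matching of each $L(P_i)$ forces $\mathrm{length}(P_i)$ even, and $\sum_i\mathrm{length}(P_i)=|V(M)|=2k$) is identical to the paper's.
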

\begin{proof}
	First, suppose that $H$ contains the vertex-disjoint paths $P_1,\ldots, P_t$ of even positive lengths such that $\sum_{i=1}^{t} length(P_i)=2k$. So, the line graph of these paths induces a forest in $G$ consisting of a union of odd paths (paths with odd lengths) whose matching number is equal to $k$. 
	
	Now, Suppose that $M$ is an acyclic matching of $G$. So,  $G[M]$ is a forest. Also, $G[M]$ does not have any vertex with degree greater than two, since otherwise $G$ contains an induced $K_{1,3}$ contradicting with the fact that $G$ is a line graph \cite{beineke}. Thus, $G[M] $ is a disjoint union of paths with odd lengths. 
	The only graph whose line graph is a disjoint union of odd paths is a disjoint union of even paths $P_1,\ldots, P_t$. The matching number of $L(P_i)$ is equal to $length(P_i)/2$, therefore, $\sum_{i=1}^{t} length(P_i)=2k$.
\end{proof}
\begin{theorem}
	The \textsc{Acyclic Matching} problem remains NP-hard when the input is restricted to the planar line graphs with maximum degree $4$.\label{theorem:2}
\end{theorem}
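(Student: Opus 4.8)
The plan is to build on Lemma~\ref{lem5}. Since an acyclic matching of size $k$ in a line graph $L(H)$ is the same thing as a family of vertex-disjoint paths $P_1,\dots,P_t$ in $H$, each of even positive length, with $\sum_i\mathrm{length}(P_i)=2k$, it suffices to produce, from a suitable NP-hard problem, a \emph{planar} graph $H$ with $\Delta(H)\le 3$ in which deciding the maximum possible value of $\sum_i\mathrm{length}(P_i)$ over such even-path families is NP-hard. Indeed, the line graph of a planar graph of maximum degree at most $3$ is again planar, and $\deg_{L(H)}(uv)=\deg_H(u)+\deg_H(v)-2\le 4$ for every edge $uv$, so $L(H)$ will automatically be a planar line graph of maximum degree at most $4$; the theorem then follows by setting $k$ equal to the relevant threshold. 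I would also record the elementary fact that a family of $c$ vertex-disjoint paths in a graph on $N$ vertices covers at most $N-c$ edges, with equality forcing $c=1$ and the path to be spanning. Hence, when $|V(H)|$ is odd, the maximum of $\sum_i\mathrm{length}(P_i)$ over even-path families equals $|V(H)|-1$ precisely when $H$ has a Hamiltonian path (which then has even length), and is strictly smaller otherwise.

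For the hardness, I would reduce from \textsc{Hamiltonian Cycle} on planar cubic graphs, a classical NP-complete problem. Given a planar cubic graph $G$ on $n$ vertices and a fixed edge $e_0=u_0v_0$, I would construct $H$ by subdividing $e_0$ into a path $u_0\!-\!w_1\!-\!w_2\!-\!v_0$, attaching a pendant vertex $p_1$ to $w_1$ and a pendant vertex $p_2$ to $w_2$, and subdividing the edge $p_1w_1$ once more (inserting a vertex $q_1$) in order to fix the parity, so that $|V(H)|=n+5$ is odd. The graph $H$ is planar with $\Delta(H)=3$. A short case analysis shows that, because $p_1,p_2$ are the only degree-one vertices of $H$, any Hamiltonian path of $H$ must run $p_1\!-\!q_1\!-\!w_1\!-\!u_0\!-\!(\text{a Hamiltonian $u_0$--$v_0$ path of }G-e_0)\!-\!v_0\!-\!w_2\!-\!p_2$, and conversely; thus $H$ has a Hamiltonian path if and only if $G$ has a Hamiltonian cycle through $e_0$. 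Combining this with Lemma~\ref{lem5} and the edge-count fact above, $L(H)$ has an acyclic matching of size $(n+4)/2$ if and only if $G$ has a Hamiltonian cycle through $e_0$. Running the construction for every edge $e_0\in E(G)$ (a polynomial-time Turing reduction, or a many-one reduction after a standard combining step on the disjoint union of the resulting line graphs, or a reduction from the prescribed-edge variant of planar cubic Hamiltonian cycle) yields the desired NP-hardness.

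I expect the parity bookkeeping to be the crux. A cubic graph has an even number of vertices, so a Hamiltonian path of $G$ itself has odd length and is \emph{not} an admissible even path; the whole point of the pendant-and-subdivision gadget is to re-route the would-be Hamiltonian path through an auxiliary chain whose length restores the correct parity while simultaneously pinning down the two endpoints, and the delicate part is to verify that the gadget does exactly this and nothing else --- in particular that it keeps $\Delta(H)\le 3$ and planarity and creates no new way to build a large even-path family. A secondary technical point is the passage from ``Hamiltonian cycle through the fixed edge $e_0$'' to plain ``Hamiltonian cycle''; once that is handled as indicated, the remaining verifications (planarity and the degree bound for $L(H)$, and the equivalence between acyclic matchings of $L(H)$ and even-path families in $H$) are immediate from Lemma~\ref{lem5} and the standard facts about line graphs.
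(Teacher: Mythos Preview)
Your approach is correct but takes a different route from the paper. The paper reduces directly from \textsc{Hamiltonian Path} on grid graphs of maximum degree~$3$: since grid graphs always contain vertices of degree at most~$2$, the parity of $|V(H)|$ can be fixed by simply subdividing one edge incident with such a vertex, and then Lemma~\ref{lem5} together with the count $\sum_i(\ell_i+1)\le |V(H)|$ finishes the argument in two lines. You instead start from \textsc{Hamiltonian Cycle} on planar cubic graphs and build a pendant-and-subdivision gadget around a chosen edge $e_0$; this simultaneously fixes the parity and pins the endpoints of the would-be Hamiltonian path, but it forces you to deal with the prescribed-edge issue. Of your three proposed fixes for that, the Turing reduction and the appeal to the prescribed-edge variant of planar cubic \textsc{Hamiltonian Cycle} are valid; the ``disjoint union'' many-one reduction is not obviously correct as stated, since the maximum acyclic matching of a disjoint union is the \emph{sum} over components and you do not control the off-target components tightly enough to set a single global threshold. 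One further imprecision: equality in $\sum_i \ell_i \le N-c$ does \emph{not} by itself force $c=1$ (a perfect matching on four vertices gives $c=2$ with equality); what you actually need, and what your application correctly uses, is that $\sum_i\ell_i = N-1$ forces $c\le 1$ via $N-1\le N-c$.
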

\begin{proof}
	We give a reduction from the \textsc{Hamiltonian Path} problem that is  well-known to be NP-hard when the input graph is a grid graph with maximum degree $3$ \cite{Grid}. Let $H$ be an instance of the \textsc{Hamiltonian Path} problem, where $H$ is a grid graph with $\Delta(H)=3$. Without loss of generality, we assume that the number of vertices of $H$ is an odd number $2k+1$.\footnote{If $H$ has an even number of vertices, then let $v$ be a vertex of degree at most two in $H$ and subdivide an edge incident with $v$. It is clear that $H$ has a Hamiltonian path if and only if the new graph has a Hamiltonian path.}
	Let $G$ be the line graph of $H$. Since $\Delta(H)= 3$, we have $\Delta(G)\leq 4$. It is known that the line graph of a graph $H$ is planar if and only if $H$ is planar, $\Delta(H)\leq 4$, and every vertex of degree $4$ in $H$ is a cut-vertex \cite{Planar}. Since $H$ is planar and $\Delta(H)=3$, $G$ is also planar. 
	
	Now, we claim that $H$ has a Hamiltonian path if and only if $G$ has an acyclic matching of size $k$. If $H$ has a Hamiltonian path, it has a path of length $2k$ as a subgraph. So, by Lemma~\ref{lem5}, $G$ has a acyclic matching of size $k$. Conversely, suppose that $G$ has an  acyclic matching of size $k$. Again by  Lemma~\ref{lem5}, there are vertex-disjoint paths $P_1,\dots,P_t$, for some integer $t$, as a subgraph in $H$ such that $\sum_{i=1}^{t}length(P_i)=2k$. Suppose that the length of $P_i$ is $l_i$. So the number of vertices of the path $P_i$ is $l_i+1$. Since $P_1,\dots,P_t$ are vertex-disjoint we have
	\[
	2k+1\geq \sum_{i=1}^{t}(l_i+1)= \sum_{i=1}^{t}l_i+t= 2k+t.
	\]	
	Thus, $t=1$ and $P_1$ is a path with length $2k$, i.e. $P_1$ is a Hamiltonian path in $H$.
\end{proof}

In the next section, we will prove that the \textsc{Acyclic Matching} problem is fixed-parameter tractable with respect to the parameter $k$ on the class of line graphs.
\section{FPT Results}\label{sec:fpt}
In this section, we prove fixed-parameter tractability of the \textsc{Acyclic Matching} problem on some subclasses of graphs such as line graphs, bounded tree-width graphs, $C_4$-free graphs and any proper minor-closed class of graphs. 

In Theorem~\ref{theorem:2}, we proved that the problem is NP-hard for the planar line graphs with maximum degree $4$. Here, we are going to prove that the problem is FPT with respect to the parameter $k$ for the class of line graphs. To see this, note that by Lemma~\ref{lem5}, every line graph $G=L(H)$ has an acyclic matching of size $k$ if and only if $H$ contains a forest consisting of $t$ vertex-disjoint paths $P_1,\ldots,P_t$, for some positive integer $t$, such that every $P_i$ has an even positive length and $\sum_{i=1}^{t}length(P_i)=2k$. The number of such forests is at most equal to $p(2k)$, the number of partitions of $2k$, which is known that is at most $e^{3\sqrt{2k}}$ \cite{8}. On the other hand, by a result of Alon, Yuster and Zwick \cite{9}, the problem of finding a given forest $F$ with $k$ vertices in a arbitrary graph $G$ on $n$ vertices as a subgraph can be solved in $2^{O(k)}.n^{O(1)}$ time. Therefore, deciding if the line graph $G$ on $n$ vertices has an acyclic matching of size $k$ can be solved in $2^{O(k^{3/2})}.n^{O(1)}$ time. In the following, using the technique of color coding, we improve this observation and prove that the problem can be solved in $2^{O(k)}.n^{O(1)}$ time.

%
%

\begin{theorem}\label{theorem:7} For every line graph $G$ on $n$ vertices, the \textsc{Acyclic Matching} problem can be solved in  $\;2^{O(k)}.n^{O(1)}$ time. 
\end{theorem}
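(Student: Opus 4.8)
The plan is to combine Lemma~\ref{lem5} with the color-coding technique in a single pass, so that the $p(2k)$-fold enumeration over partitions of $2k$ (which is what makes the naive argument costly) is replaced by one dynamic program that simultaneously accounts for every admissible partition. By Lemma~\ref{lem5}, $G=L(H)$ has an acyclic matching of size $k$ if and only if $H$ contains a linear forest $F$, i.e.\ a vertex-disjoint union of paths $P_1,\dots,P_t$, with $\sum_i \mathrm{length}(P_i)=2k$ edges in which every component has even positive length. Since each component contributes at least two edges we have $t\le k$, so such an $F$ has $2k+t\le 3k$ vertices; hence $3k$ colors suffice for a color-coding argument whose target is $F$.

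First I would set up the standard random coloring: color $V(H)$ uniformly at random with colors from $[3k]$ and search for a \emph{colorful} linear forest $F$ of the above type (each of its $\le 3k$ vertices receiving a distinct color). If such an $F$ exists at all, then a fixed witness becomes colorful with probability at least $(3k)!/(3k)^{3k}\ge e^{-3k}$ (the bound only improves when the witness has fewer than $3k$ vertices), so $e^{O(k)}$ independent trials — or, for a deterministic algorithm, an $(n,3k)$-perfect hash family of size $2^{O(k)}\log n$ in the style of \cite{9} — suffice; conversely any colorful $F$ the search returns is a genuine witness, and then $G$ has an acyclic matching of size $k$ by Lemma~\ref{lem5}.

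The core is the DP that, given a coloring $c$, decides whether a colorful $F$ exists. I would run it in two layers over color subsets $S\subseteq[3k]$. The inner layer is the textbook colorful-path DP: $g(S,u)=1$ iff $H$ has a colorful path with an endpoint at $u$ using exactly the color set $S$, computed by $g(S\cup\{c(u)\},u)=\bigvee_{v\in N(u),\ c(v)\notin S} g(S,v)$ with base case $g(\{c(u)\},u)=1$; this costs $2^{3k}\cdot n^{O(1)}$, and from it one reads off $\mathrm{even}(S)=1$ iff $|S|$ is odd, $|S|\ge3$, and $g(S,u)=1$ for some $u$ (a path on $|S|$ vertices has even positive length exactly when $|S|$ is odd and at least $3$). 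The outer layer assembles even paths into a forest: $f(S,e)=1$ iff the colors in $S$ can be exactly covered by a colorful vertex-disjoint union of even-length paths using $e$ edges in total; to keep the recursion unambiguous I would require that the path covering $\min S$ be peeled off first, giving $f(\emptyset,0)=1$ and $f(S,e)=\bigvee_{S'\subseteq S,\ \min S\in S',\ \mathrm{even}(S')} f\!\left(S\setminus S',\,e-(|S'|-1)\right)$; summing over all $S$ and their subsets this costs $3^{3k}\cdot n^{O(1)}$. The algorithm accepts iff $f(S,2k)=1$ for some $S$; correctness of the outer DP holds because every colorful all-even linear forest decomposes uniquely by repeatedly removing the component containing the currently smallest available color.

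Putting the pieces together, one coloring is processed in $3^{3k}\cdot n^{O(1)}=2^{O(k)}\cdot n^{O(1)}$ time and $2^{O(k)}\log n$ colorings suffice, for an overall deterministic bound of $2^{O(k)}\cdot n^{O(1)}$. I do not expect a genuinely hard step here — each ingredient is standard — so the main obstacle is really the packaging: making sure the single DP faithfully replaces the partition enumeration with no hidden blow-up, in particular getting the canonical-decomposition convention in $f$ right so that the subset recursion over $S'$ is simultaneously complete and efficient, and committing up front to an $O(k)$-sized color budget (the clean choice $3k$) that the perfect-hash-family derandomization can be invoked with.
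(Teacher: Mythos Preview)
Your proposal is correct and follows essentially the same route as the paper: reduce via Lemma~\ref{lem5} to finding a colorful ``$k$-path forest'' in $H$ on at most $3k$ vertices, run a DP over color subsets, and derandomize with an $(|V(H)|,3k)$-perfect hash family. The only difference is organizational: the paper uses a single DP $f(S,i,u)$ that grows the forest two edges at a time from a designated endpoint $u$, whereas you split the work into an inner colorful-path DP and an outer subset-assembly DP; both yield the claimed $2^{O(k)}\cdot n^{O(1)}$ bound.
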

\begin{proof}
	Fix the line graph $G$ on $n$ vertices and integer $k$ as an instance of \textsc{Acyclic Matching} problem. Let $H$ be a graph with vertex set $V$ such that $G=L(H)$ (note that $H$ can be found in linear time \cite{lehot,roso}). Without loss of generality, assume that $H$ has no isolated vertex and $|V|\geq 3k$.
	By Lemma~\ref{lem5}, $G$ has a acyclic matching of size $k$ if and only if  $H$ has a forest $F$ as a subgraph which consists of vertex-disjoint paths $P_1,\ldots,P_t$ where each $P_i$ has an even positive length and $\sum_{i=1}^{t}length(P_i)=2k$. Let us call such a forest a \textit{$k$-path forest}. To check if $H$ contains a $k$-path forest, we use the standard technique of color coding. 
	
	Fix a coloring $c:V\to [3k]$ which colors the vertices of $H$ with $3k$ colors. First, using the following dynamic programming, we prove that deciding if $H$ contains a colorful $k$-path forest\footnote{By a colorful forest, we mean a forest whose all vertices have different colors.} $F$  can be solved in $2^{3k}.k.|V|^{O(1)}$ time.  
	
	For any set $S\subset [3k]$, $|S|\geq 3$, every integer $i\in [k]$ and every vertex $u \in V$, define the Boolean function $f(S,i,u)$, such that $f(S,i,u)=1$ if and only if there exists a colorful $i$-path forest $F$ 
	such that the colors of all vertices in $F$ come from $S$ and $u$ is an endpoint of one of the connected components of $F$. It is clear that the answer is yes if and only if $f([3k],k,u)=1$, for some vertex $u\in V$. The following recursion can compute the function $f$. 
	
	First, note that if $|S|=3$, then $f(S,i,u)=1$ if and only if $i=1$ and there are some vertices $v,w\in V$  such that $S=\{c(w),c(v),c(u)\} $ and $w-v-u$ is a path in $H$. Also, for every $S$ with $|S|\geq 4$, we have $f(S,i,u)=1$ if and only if, either
	\begin{itemize}
		\item there are some vertices $w,v\in V$ such that $c(w),c(v),c(u) $ are distinct members of $S$, $w-v-u$ is a path in $H$ and $f(S\setminus\{c(w),c(v),c(u)\}, i-1, x)=1$ for some vertices $x\in V$, or
		\item there are some vertices $w,v\in V$ such that $c(w),c(v),c(u) $ are distinct members of $S$, $w-v-u$ is a path in $H$ and $f(S\setminus\{c(u),c(v)\}, i-1, w)=1$. 
	\end{itemize}
	To finalize the proof, we follow the standard derandomization using the concept of $(n,k)$-perfect hash family \cite{cygan}. An $(n,k)$-perfect hash family $\mathcal{C}$ is a family of functions from $[n]$ to
	$[k]$ such that for every set $S\subseteq [n]$ of size $k $ there exists a function $c \in \mathcal{C}$ such that for every $1\leq j\leq k$, $|c^{-1}(j)\cap S|=1$. It is proved in \cite{noar} that for any $n, k \geq 1$, one can construct an $(n, k)$-perfect hash family of size $e^kk^{O(\log k)} \log n$ in time $e^kk^{O(\log k)}n \log n$. Now, consider an $(|V|,3k)$-perfect hash family $\mathcal{C}$ and for each coloring $c\in \mathcal{C}$, use the above dynamic programming to check if $H$ contains a colorful $k$-path forest $F$. Hence, the runtime of the whole algorithm is at most  $e^{3k}. {(3k)}^{O(\log 3 k)}.2^{3k}.k.|V|^{O(1)}+e^{3k}{3k}^{O(\log 3k)}|V| \log |V|$. Also, since $H$ has no isolated vertex, we have $|V|\leq 2|E(H)|=2n$. Therefore, the runtime of the algorithm is at most $2^{O(k)}n^{O(1)}$.
	%
\end{proof}
	The well-known theorem by Courcelle \cite{4,5} states that every graph property expressible in the monadic second-order (MSO) logic can be decided in linear time on graphs with bounded treewidth. In fact, the \textsc{Acyclic Matching} problem is expressible in MSO logic as follows. (Note that say by $\mathcal{M}(e)$ we mean $e\in \mathcal{M}$.)
\begin{align*}
\text{MAX}\;( \mathcal{M}):&\ \phi_1(\mathcal{M})\land \neg\phi_2(\mathcal{M})
\\
\phi_1(\mathcal{M}):&\ \bigg[\Big(\forall \mathcal{M}(e); E(e)\Big)\land\Big(\forall \mathcal{M}(e_1)\;\forall \mathcal{M}(e_2)\;\forall V(v); \big(e_1\neq e_2\rightarrow\neg(I(v,e_1)\land I(v,e_2))\big)\Big)\bigg]
\\
\phi_2(\mathcal{M}):&\ \bigg[\exists X; \Big(\forall X(v)\;\big(\exists \mathcal{M}(e); I(v,e)\big)\Big)\land \Big(\forall X(v)\;\exists X(u_1)\;\exists X(u_2)\;\exists E(e_1)\;\exists E(e_2);\\
&\ \Big(v\neq u_1\land v\neq u_2\land u_1\neq u_2\land I(v,e_1)\land I(v,e_2)\land I(u_1,e_1)\land I(u_2,e_2)\Big)\Big)\bigg]
\\
\end{align*}
where $E$ and $V$ are unary relation denoting the vertex and the edge set of the graph respectively, $I$ is the incidence relation and $\mathcal{M}$ is a subset of edges that is a matching (due to $\phi_1(\mathcal{M})$) and  has no cycle on its endpoints (due to $\phi_2(\mathcal{M})$), thereby being an acyclic matching. Therefore, the \textsc{Acyclic Matching} problem is FPT with respect to the treewidth of the input graph. However, in Courcelle's theorem, the function of the treewidth in the running time of the algorithm is huge. Here, we give a dynamic programming which can find the maximum acyclic matching in time $t^{O(t)}\cdot O(n)$, where $t$ and $n$ are respectively the treewidth and the number of vertices of the input graph.
\begin{theorem}\label{thm:treewidth}
	Suppose that the graph $G$ on $n$ vertices and its nice tree decomposition of width $t$ are given. The maximum \textsc{Acyclic Matching} of $G$ can be found in time $t^{O(t)}\cdot O(n)$.\label{theorem:8}
\end{theorem}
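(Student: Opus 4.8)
The plan is to solve the optimization version by a bottom-up dynamic programming along the given nice tree decomposition $\mathcal{T}=(T,\{X_i\}_{i\in V(T)})$. After the linear-time preprocessing of \cite{kloks1994vol} we may assume $\mathcal{T}$ has the introduce-edge node type and $O(tn)$ nodes, with empty bags at the root $r$ and at every leaf; write $G_i$ for the subgraph of $G$ formed by all vertices and edges introduced in the subtree rooted at $i$. The idea is to record, for the intended acyclic matching $M$, not only which bag vertices it saturates but also enough connectivity data to forbid a cycle of $G[V(M)]$ the moment it would appear. Concretely, a \emph{state} at a node $i$ is a triple $(S,\mu,\mathcal{Q})$ consisting of a set $S\subseteq X_i$ (the bag vertices lying in $V(M)$), a function $\mu\colon S\to\{0,1\}$ (with $\mu(v)=1$ iff the matching edge incident to $v$ has already been introduced in $G_i$), and a partition $\mathcal{Q}$ of $S$ whose blocks are the traces on $S$ of the connected components of the subgraph of $G_i$ induced by the already-introduced vertices of $V(M)$. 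We let $\mathrm{dp}_i[S,\mu,\mathcal{Q}]$ be the largest number of $M$-edges contained in $G_i$ over all acyclic matchings $M$ realizing this state, and $-\infty$ if none exists; the answer is then $\mathrm{dp}_r[\emptyset,\emptyset,\emptyset]$, and back-pointers recover an optimal matching. Since the number of states per node is at most $2^{t+1}\cdot 2^{t+1}\cdot B_{t+1}=t^{O(t)}$ (where $B_{t+1}$ denotes a Bell number), it suffices to process each node in time $t^{O(t)}$, giving total time $t^{O(t)}\cdot O(tn)=t^{O(t)}\cdot O(n)$.

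Next I would spell out the transitions. At a \textbf{leaf} the only state is $(\emptyset,\emptyset,\emptyset)$ with value $0$. At an \textbf{introduce-vertex} node for $v$, the vertex $v$ is isolated in $G_i$, so from each child state we either keep $v\notin S$, or add $v$ to $S$ with $\mu(v)=0$ as a fresh singleton block; the value is unchanged. At an \textbf{introduce-edge} node for $uv$ and a child state $(S,\mu,\mathcal{Q})$: if $u\notin S$ or $v\notin S$ the edge is not inside $G[V(M)]$ and the state is copied with the same value; if $u,v$ lie in a common block of $\mathcal{Q}$ the state is discarded, since this edge would close a cycle in $G[V(M)]$; if $u,v$ lie in distinct blocks we merge those two blocks and take either (i) $uv$ as a non-matching edge of $G[V(M)]$, keeping $\mu$ and the value, or, only when $\mu(u)=\mu(v)=0$, (ii) $uv$ as a matching edge, setting $\mu(u)=\mu(v)=1$ and increasing the value by one. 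At a \textbf{forget} node for $w$ we discard any child state with $w\in S$ and $\mu(w)=0$ (such a $w$ can never be saturated, as no later bag contains it), and otherwise delete $w$ from $S$ and from its block, keeping the value. At a \textbf{join} node with children $j,k$ we combine states $(S,\mu_j,\mathcal{Q}_j)$ and $(S,\mu_k,\mathcal{Q}_k)$ having the same $S$: as the matching edges introduced in the two subtrees are edge-disjoint we require $\mu_j(v)\mu_k(v)=0$ for every $v\in S$, set $\mu=\mu_j\vee\mu_k$, and add the two values; to update the partition we form the bipartite multigraph $\mathcal{H}$ whose two sides are the blocks of $\mathcal{Q}_j$ and of $\mathcal{Q}_k$, with one edge joining the $\mathcal{Q}_j$-block and the $\mathcal{Q}_k$-block of each $v\in S$, discard the combination if $\mathcal{H}$ contains a cycle, and otherwise let two vertices of $S$ share a block of $\mathcal{Q}$ precisely when their $\mathcal{H}$-edges lie in one connected component of $\mathcal{H}$. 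Throughout, each target state keeps the maximum value over all contributing child states and choices.

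Correctness would follow by the usual two-directional induction on $T$: the restriction of any acyclic matching $M$ of $G$ to $G_i$ yields a state whose stored value is at least $|M\cap E(G_i)|$, and every finite stored value is witnessed by an acyclic matching of $G$. The essential point is that, by the defining property of tree decompositions, every cycle of $G[V(M)]$ is completed either at a single introduce-edge step within one subtree --- ruled out by the same-block test --- or at a join node using two internally disjoint pieces coming from the two children --- ruled out by the acyclicity test on $\mathcal{H}$ --- and, conversely, any acyclic matching is built up without ever triggering these tests. (That one may report $\mathrm{dp}_r[\emptyset,\emptyset,\emptyset]$ for the decision version as well follows since an induced subgraph of a forest is a forest, so subsets of acyclic matchings stay acyclic.)

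As with other connectivity-type dynamic programs over tree decompositions (such as those for \textsc{Steiner Tree} or \textsc{Feedback Vertex Set}), the step I expect to be the main obstacle is the join node: one has to prove carefully that gluing two partial induced forests along the shared bag produces a forest if and only if the block-incidence multigraph $\mathcal{H}$ is acyclic, and that the component structure inherited by the bag is exactly the one read off from $\mathcal{H}$. A secondary subtlety is the saturation flag $\mu$, which at a join must be combined by logical ``or'' under the disjointness constraint $\mu_j(v)\mu_k(v)=0$ --- rather than by agreement between the two children --- because the matching edge at a given bag vertex is introduced in exactly one of the two subtrees.
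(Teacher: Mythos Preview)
Your proposal is correct and follows essentially the same strategy as the paper: a bottom-up dynamic program on a nice tree decomposition with introduce-edge nodes, whose state records (i) which bag vertices belong to the intended solution, (ii) whether each such vertex is already saturated by a matching edge introduced below, and (iii) the partition of these vertices induced by the components of the partial induced forest. The paper encodes (i)--(ii) by a single map $f\colon X_i\to\{0,1,2\}$ and (iii) by a labeling $g$, and it additionally carries the set $M$ of matching edges with both endpoints in the bag; your pair $(S,\mu)$ is equivalent to $f$ and your partition $\mathcal{Q}$ to $g$, while you dispense with $M$ by relying on the saturation flag $\mu$ alone. At the join node both proofs reduce acyclicity of the glued forest to acyclicity of an auxiliary bipartite multigraph on the two families of blocks; your version places one edge per vertex of $S$, the paper's places at most two per block pair, but the tests are equivalent. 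The running-time analysis is the same in both cases.
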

\begin{proof}
Suppose that a nice tree decomposition $ (T, \{ X_{i} \}_{i\in V(T )} ) $ for $G$ of width $t$ is given.  For each node $i\in V(T)$, let $T_{i}$ be the subtree of $T$ rooted at $i$ and $G_{i} = (V_{i}, E_{i})$ be the subgraph of $G$ such that $V_i=\bigcup_{j \in V(T_{i})} X_{j}$ and $E_i$ are all edges of $G$ introduced in nodes of $T_i$. 

For each node $i\in V(T)$, by a partial solution of $G_i$, we mean a pair $(M',S)$ such that $M'$ is a matching of $G_i$ and $S$ is a subset of $X_i$ where  $V(M')\cap S=\emptyset$ and $G_i[V(M')\cup S]$ is acyclic. In fact $S$ is the set of vertices which are not still saturated by $M'$, however are reserved to be in the final solution.

Now, let $f:X_i\to\{0,1,2\}$ be a function such that when $f(u)=0$, we mean that $u$ is not selected in a partial solution of $G_i$, i.e. $u\not\in V(M')\cup S$, when $f(u)=1$, we mean that $u$ is saturated by $M'$, i.e. $u\in V(M')$ and finally when $f(u)=2$, we mean that $u$ is not saturated by $M'$ but is reserved, i.e. $u\in S$. 
Also, let $g: f^{-1}(\{1,2\})\to [t+1] $ be a function where $g(u)=g(v)$ means that $u$ and $v$ are in the same connected component of $G_i[V(M')\cup S]$. 
Finally, let $M$ be a matching  of $G_i$ where $V(M)\subseteq f^{-1}(1)$. Now, define the function $\psi$ as follows.

\begin{align}
\psi_i[M,f,g]=& \max |M'|\nonumber \\
&\ \text{s.t.}\ M' \text{ is a matching of }G_i,\nonumber\\
&\qquad M=\{uv\in M': u,v\in X_i \}\text{(i.e. restriction of $M'$ on $X_i$ is $M$)} \nonumber\\
& \qquad \forall x\in X_i;\ f(x)=1 \text{ iff } x\in V(M'),\nonumber\\
&\qquad G_i[V(M')\cup f^{-1}(2)] \text{ is acyclic},\nonumber\\
& \qquad \forall u,v\in \dom(g);\ \nonumber\\
&\qquad g(u)=g(v) \text{ iff there is a path from $u$ to $v$ in } G_i[V(M')\cup f^{-1}(2)].\label{path}
\end{align}
Also, if no such $M'$ exists, then define $\psi_i[M,f,g]=-\infty$.
Since $X_r=\emptyset$, it is clear that the final solution is equal to $\psi_r[\emptyset,\emptyset,\emptyset]$. So, for given $M,f,g$, we find $\psi_i[M,f,g]$ recursively in the following cases.

\paragraph{Introduce vertex node.} Suppose that $i$ is an introduce vertex node with the unique child $j$, where $X_i=X_j\cup \{u\}$. In this case, it is clear that if either $f(u)=1$ or $f(u)=2$ and $|g^{-1}(g(u))|\geq 2$, then $\psi_i[M,f,g]=-\infty$, because  $u$ is an isolated vertex in $G_i$. If either $f(u)=0$ or $f(u)=2$ and $g^{-1}(g(u))=\{u\}$, then $\psi_i[M,f,g]=\psi_j[M,f|_{X_j},g|_{X_j}]$. This can be carried out in $O(t)$ time.

\paragraph{Introduce edge node.}  Suppose that $i$ is an introduce edge node labeled with $uv\in E(G)$ with the unique child $j$, where $X_i=X_j$. So, $E(G_i)=E(G_j)\cup \{uv\}$. If $f(u)f(v)=0$, then $\psi_i[M,f,g]=\psi_j[M,f,g]$, because the edge $uv$ does not contribute in the solution. 
If $f(u)f(v)\neq 0$ and $g(u)\neq g(v)$, then $\psi_i[M,f,g]=-\infty$ because there is an edge between $u$ and $v$ which contradicts \eqref{path} and so there is no feasible solution. 
Now, if $f(u)f(v)\neq 0$, $g(u)=g(v)=\ell$ and $uv\not\in M$, then 
\begin{align}
\psi_i[M,f,g]=&\max \psi_j[M,f,g']\nonumber \\
\begin{split} \label{g'}
&\ \text{s.t. } \forall x\in \dom(g)\setminus g^{-1}(\ell);\ g'(x)=g(x),\\
&\qquad \forall x\in g^{-1}(\ell);\ g'(x)\in \{g'(u),g'(v)\},\\
&\qquad g'(u)\neq g'(v),\\
&\qquad g'(u), g'(v)\not\in \range(g)\setminus \{\ell\}. 
\end{split}
\end{align}
The last equality holds because for every solution $M'$ of $\psi_i[M,f,g]$, by removing the edge $uv$, the number of connected components of $G_i[V(M')\cup f^{-1}(2)]$ increases by one and $u$ and $v$ are in different connected components.  
Finally, if $f(u)f(v)\neq 0$, $g(u)=g(v)=\ell$ and $uv\in M$, then

\begin{align*}
\psi_i[M,f,g]=&1+\max \psi_j[M\setminus \{uv\},f',g']\\
&\ \text{s.t. } g' \text{ satisfies } \eqref{g'}, \\
&\qquad \forall x\in X_i\setminus \{u,v\}, f'(x)=f(x),\\
& \qquad f'(u)=f'(v)=2.
\end{align*}
The last equality holds because for every solution $M'$ of the right side, both vertices $u,v$ are reserved (not saturated) and are in different connected components of $G_j[V(M')\cup f'^{-1}(2)]$, so $M'\cup\{uv\}$ is a solution of the left side. This can be carried out in $2^{O(t)}$ time, because the number of possible functions $g'$ is at most $2^{O(t)}$.

\paragraph{Forget node.} Suppose that $i$ is a forget node with the unique child $j$, where $X_i=X_j\setminus \{u\}$, for some $u\in X_j$. Define $f_{_0},f_{_1}:X_j\to \{0,1,2\}$, $g_\ell: \dom(g)\cup \{u\} \to [t+1]$, $\ell\in [t+1]$, as follows.

\[
	f_{_s}(x)= 
\begin{cases}
f(x) & x\in X_i\\
s & x=u
\end{cases},\ s\in\{0,1\}
,\quad 	g_{_\ell}(x)= 
\begin{cases}
g(x) & x\in \dom(g)\\
\ell & x=u
\end{cases}, \ \ell\in[t+1].
\]
Also, define 
\begin{align*}
\psi^1_i[M,f,g]&=\max_{\ell\in [t+1]} \psi_j[M,f_{_1},g_{_\ell}],\\
\psi^2_i[M,f,g]&=\max_{v\in (N(u)\cap f^{-1}(1))\setminus V(M)} \psi_j[M\cup\{uv\},f_{_1},g_{_{g(v)}}],
\end{align*}
Therefore,
\begin{align*}
\psi_i[M,f,g]=&\max \{\psi_j[M,f_{_0},g], \psi^1_i[M,f,g], \psi^2_i[M,f,g]\}.
\end{align*}

The last equality holds because every solution $M'$ of $\psi_i[M,f,g]$ has three possibilities; if $u\not\in V(M')$, then $M'$ is solution of $\psi_j[M,f_{_0},g]$. If $uv\in M'$, for some $v\notin X_j$, then $M'$ is a solution for $\psi^1_i[M,f,g]$. Finally, If $uv\in M'$ for some $v\in X_i$, then $f(v)=1$ and $M'$ is a solution for $\psi^2_i[M,f,g]$. Here, computation of $\psi_i[M,f,g]$ takes $O(t^2)$ time.

\paragraph{Join node.} Finally, suppose that $i$ is a join node with two childs $j,k$, where $X_i=X_j=X_k$ and $G_i=G_j\cup G_k$. For two functions $f_{_1},f_{_2}:X_i\to \{0,1,2\}$, we say that $(f_{_1},f_{_2})$ is a compatible pair with $f$ if the following conditions hold.

\begin{enumerate}
	\item If $f(x)=0$, then $f_{_1}(x)=f_{_2}(x)=0$, if $f(x)=2$, then $f_{_1}(x)=f_{_2}(x)=2$,
	\item if $f(x)=1$ and $x\in V(M)$, then $f_{_1}(x)=f_{_2}(x)=1$, and
	\item if $f(x)=1$ and $x\not\in V(M)$, then either $f_{_1}(x)=1 $ and $f_{_2}(x)=2$, or $f_{_1}(x)=2 $ and $f_{_2}(x)=1$.
\end{enumerate}

Now, let $g_{_1},g_{_2}: \dom(g)\to [t+1]$ be two functions. We construct a bipartite multigraph $H$ with bipartition $(\range(g_{_1}),\range(g_{_2}))$ in which two vertices $\ell_1\in \range(g_{_1})$ and $\ell_2\in \range(g_{_2})$ are adjacent if $g_{_1}^{-1}(\ell_1) \cap g_{_2}^{-1}(\ell_2)$ is nonempty. Also, we add a parallel edge between $\ell_1$ and $\ell_2$ in $H$ if  $g_{_1}^{-1}(\ell_1) \cap g_{_2}^{-1}(\ell_2)$ is not a subset of a connected component of $G_i[\dom(g)]$.

Now, we say that $(g_{_1},g_{_2})$ is an admissible pair with respect to $g$ if the following conditions hold.

\begin{enumerate}
	\item For any $x,y\in \dom(g)$, if $g_{_1}(x)=g_{_1}(y)$ or $g_{_2}(x)=g_{_2}(y)$, then $g(x)=g(y)$,
	\item $H$ is a forest, and
	\item For any $x,y\in \dom(g)$, $g(x)=g(y)$ iff there is a path between $g_{_1}(x)$ and $g_{_1}(y)$ in $H$. 
\end{enumerate}

Therefore, we have 
\begin{align*}
\psi_i[M,f,g]=&\max \psi_j[M,f_{_1},g_{_1}]+ \psi_k[M,f_{_2},g_{_2}]-|M|\\
&\ \text{s.t. } (f_{_1},f_{_2}) \text{ is a compatible pair with } f\text{, and} \\
&\qquad (g_{_1},g_{_2}) \text{ is an admissible pair w.r.t. } g.
\end{align*}

To see correctness of last equality, let $M'$ be a solution of $\psi_i[M,f,g]$. Also, let $M_j$ and $M_k$ be the restriction of $M'$ on $G_j$ and $G_k$, respectively. For every vertex $x\in X_i\cap V(M_j)\setminus V(M)$, set $f_{_1}(x)=1$ and $f_{_2}(x)=2$ and for every vertex $x\in X_i\cap V(M_k)\setminus V(M)$, set $f_{_1}(x)=2$ and $f_{_2}(x)=1$. Also, define $g_{_1}$ such that $g_{_1}(x)=g_{_1}(y)$ iff $x$ and $y$ are in the same connected component of $G_j[V(M_j)\cup f_{_1}^{-1}(2)]$ and define $g_{_2}$ similarly. One can check that $M_j$ and $M_k$ are solutions of $\psi_j[M,f_{_1},g_{_1}]$ and $\psi_k[M,f_{_2},g_{_2}]$, respectively. Also, it is clear that $(f_{_1},f_{_2})$ is a compatible pair with $f$ and $(g_{_1},g_{_2}) $ is an admissible pair with respect to $g$.
On the other hand, any two solutions for $G_j$ and $G_k$ can be combined to get a solution for $G_i$ which is a matching because $(f_{_1},f_{_2})$ is a compatible pair and is acyclic because $(g_{_1},g_{_2})$ is an admissible pair.  
Here, computation of $\psi_i[M,f,g]$ takes $t^{O(t)}$ time, since there are at most $2^{t}$ compatible pairs $(f_{_1},f_{_2})$ and at most $t^{O(t)}$ pairs $(g_{_1},g_{_2})$ and each pair can be checked to be admissible in poly($t$) time.

The runtime of the whole algorithm is at most $t^{O(t)}.O(n)$ because the number of possible matching $M$ is at most $t^{O(t)}$, the number of possible functions $f$ is at most $3^{t+1}$ and the number of possible functions $g$ is at most $(t+1)^{(t+1)}$. Also, the number of nodes in $T$ is at most $O(tn)$ \cite{cygan}.
\end{proof}

Using Theorem~\ref{theorem:8} and meta theorems about bidimensional problems \cite{demainehaji} (also see \cite{cygan}), one can deduce that  the  \textsc{Acyclic Matching} problem can be solved in time $k^{O(\sqrt{k})}\cdot n^{O(1)}$ for planar graphs and more generally for apex-minor-free graphs. However, this theory cannot be directly applied to find an FPT algorithm on $H$-minor-free graphs for an arbitrary fixed graph $H$, because the \textsc{Acyclic Matching} problem is not minor-closed in the sense that removing edges can increase the maximum acyclic matching (and also induced matching) of the graph (see \cite{cygan}). 
Alternatively, we use the induced grid theorem (Theorem~\ref{thm:nicol}) to prove that the induced and acyclic matching problems are both fixed-parameter tractable with respect to the parameter $k$ for any proper minor-closed class of graphs.
\begin{theorem}\label{thm:minor}
	Let $\mathcal{H}$ be a proper minor-closed class of graphs. The \textsc{Acyclic Matching} problem  and the \textsc{Induced Matching} problem are both fixed-parameter tractable with respect to the parameter $k$ on the class $\mathcal{H}$ .\label{theorem:14}
\end{theorem}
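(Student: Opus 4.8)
The plan is a win/win argument on treewidth, with the two regimes handled by Theorem~\ref{theorem:8} and by the induced grid theorem respectively. Since $\mathcal H$ is proper and minor-closed, fix once and for all a graph $H\notin\mathcal H$; then every $G\in\mathcal H$ excludes $H$ as a minor and in particular is $K_t$-minor-free for the constant $t:=|V(H)|$. Write $w_r$ for the elementary $(r\times r)$-wall. I would use the induced grid theorem (Theorem~\ref{thm:nicol}) in the shape: there is a computable function $N_t$ such that every $K_t$-minor-free graph $G$ with $\tw(G)\ge N_t(r)$ contains, as an \emph{induced} subgraph, some subdivision of $w_r$.

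The key observation is that such an induced subdivision forces large matchings of both kinds. First, subdividing cannot decrease the induced matching number: if $uv\in E(w_r)$ is subdivided to a path $u$--$x$--$v$, an induced matching using $uv$ may use $ux$ instead, and $x$ has no neighbour outside $\{u,v\}$; hence $\alpha^\prime_{\text{in}}(W)\ge\alpha^\prime_{\text{in}}(w_r)$ for every subdivision $W$ of $w_r$. Second, $\alpha^\prime_{\text{in}}(w_r)\to\infty$ with $r$ (for instance a full row of $w_r$ induces a path on $\Omega(r)$ vertices, and induced matchings of $P_n$ have size $\lceil n/3\rceil$; in fact $\alpha^\prime_{\text{in}}(w_r)=\Theta(r^2)$). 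Since any induced matching is an acyclic matching, an induced subgraph of $G$ isomorphic to a subdivision of $w_r$ simultaneously witnesses $\alpha^\prime_{\text{in}}(G)\ge\alpha^\prime_{\text{in}}(w_r)$ and $\MAM(G)\ge\alpha^\prime_{\text{in}}(w_r)$. Consequently there is a computable function $\rho$ with $\rho(k)\to\infty$ such that, for $G\in\mathcal H$, the inequality $\tw(G)\ge N_t(\rho(k))$ already implies $\alpha^\prime_{\text{in}}(G)\ge k$ and $\MAM(G)\ge k$.

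The algorithm is then as follows. Given $(G,k)$ with $G\in\mathcal H$, put $g(k):=N_t(\rho(k))$, a function of $k$ only. Run a constant-factor approximation algorithm for treewidth (see, e.g., \cite{cygan}): in time $2^{O(g(k))}n^{O(1)}$ it either certifies $\tw(G)>g(k)$ or returns a tree decomposition of width $O(g(k))$. In the first case we answer YES, since by the previous paragraph $G$ has an induced (hence acyclic) matching of size at least $k$. In the second case we turn the decomposition into a nice one of the same width \cite{kloks1994vol} and run the dynamic programme of Theorem~\ref{theorem:8} to compute $\MAM(G)$ exactly for the acyclic case, and the analogous bounded-treewidth procedure for \textsc{Induced Matching} (available via Courcelle's theorem \cite{4,5}, as \textsc{Induced Matching} is MSO-expressible, or via \cite{PIM}) to compute $\alpha^\prime_{\text{in}}(G)$; then compare with $k$. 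The second branch costs $g(k)^{O(g(k))}\cdot n^{O(1)}$, so the whole procedure is fixed-parameter tractable in $k$.

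The main obstacle is getting the induced grid theorem in the right form and using it correctly: I need the \emph{induced} wall subdivision, not merely a wall as a topological minor (a non-induced wall subdivision may carry chords and then need not host a large induced matching), and I need the elementary but worth-stating fact that walls and all their subdivisions have unbounded induced matching number. A convenient point in our favour is that, for the decision problem, the induced grid theorem is used only non-constructively: we never have to \emph{find} the wall, only to know that large treewidth guarantees the answer is YES, so the sole algorithmic ingredient beyond Theorem~\ref{theorem:8} is the treewidth dichotomy above. Finally, this argument genuinely exploits that $\mathcal H$ excludes a fixed minor --- that is exactly what makes the $K_t$-minor alternative of Theorem~\ref{thm:nicol} vacuous --- and so, in line with the earlier remark that \textsc{Acyclic Matching} is not minor-monotone, it does not obviously extend to larger sparse classes.
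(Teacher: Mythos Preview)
Your overall strategy is exactly the paper's: a win/win on treewidth, invoking Theorem~\ref{theorem:8} (and the analogous bounded-treewidth result for \textsc{Induced Matching}) when $\tw(G)$ is small, and the induced grid theorem to certify a YES instance when $\tw(G)$ is large. The algorithmic shell (approximate treewidth, then branch) is fine.

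There is, however, a genuine gap. You restate Theorem~\ref{thm:nicol} as guaranteeing an induced subdivision of the elementary wall $w_r$, and all of your ``large treewidth $\Rightarrow$ large induced matching'' argument is about subdivisions of $w_r$. But Theorem~\ref{thm:nicol} has a second alternative that you have dropped: the induced subgraph one obtains may instead be the \emph{line graph of a chordless $(k\times k)$-wall}. This case is not vacuous for $H$-minor-free classes in general; for instance, line graphs of large walls are planar, have large treewidth, contain triangles at every degree-$3$ branch vertex, and need not contain any large wall subdivision as an \emph{induced} subgraph. Your subdivision-preserves-$\alpha'_{\text{in}}$ lemma therefore does not cover this outcome, and as written the high-treewidth branch of your proof is incomplete.

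The fix is short and is precisely what the paper does: in either alternative of Theorem~\ref{thm:nicol}, the outer boundary is an induced cycle of length at least $6k-8$ in $G$ (for the line-graph alternative this uses that the underlying wall is chordless), and any induced cycle of that length already contains an induced matching of size $k$. With this observation you can dispense with the subdivision argument altogether and handle both alternatives uniformly.
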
 
To prove the theorem,  we need the following result from \cite{Nicola}. 

\begin{theorem} {\rm \cite{Nicola}} \label{thm:nicol}
	$($Induced grid theorem for minor-free graphs$)$. For every
	graph $H$ there is a function $f_H : \mathbb{N} \rightarrow \mathbb{N}$ such that every $H$-minor-free graph of tree-width at least $f_H(k)$ contains a $(k \times k)$-wall or the line graph of a chordless $(k \times k)$-wall as an induced subgraph.
\end{theorem}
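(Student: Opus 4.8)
The plan is to derive the required induced structure from a large wall supplied by classical machinery, after pinning that wall down to an essentially planar region and then homogenizing its chords. Since an $H$-minor-free graph is $K_t$-minor-free for $t=|V(H)|$, I would first invoke the Excluded Grid Theorem of Robertson and Seymour: choosing $f_H(k)$ large enough, tree-width at least $f_H(k)$ forces a subdivision of an elementary $(m\times m)$-wall as a subgraph, for some $m$ that is a large function of $k$ and $t$. I would then apply the Flat Wall Theorem to replace this wall by a \emph{flat} subwall $W$ of height still polynomial in $m$, together with an apex set $A$ whose size is bounded by a function of $t$ alone; flatness means that $W$ and the portion of $G$ it controls embed in a disk once $A$ is deleted.

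The key reduction is that any induced structure disjoint from $A$ is automatically induced in $G$, because the subgraph induced on a vertex set avoiding $A$ is computed identically in $G$ and in $G-A$. So I would work inside $G-A$, restrict to $G[V(W)]$, and study the \emph{chords}: edges with both endpoints on $W$ that are not wall edges. Flatness confines every chord to the disk, so chords cannot cross wall edges and are therefore short and local; consequently each brick of $W$ carries one of only boundedly many local chord configurations. This finiteness is exactly what makes the homogenization step possible.

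I would then homogenize: by a Ramsey-type pigeonhole over the bricks of the height-$m$ flat wall, extract a sub-wall of height $k$ all of whose bricks display the same local chord type. The crux is a dichotomy asserting that, in a $K_t$-minor-free graph, only two homogeneous types survive. The \emph{empty} type leaves an induced $(k\times k)$-wall. The second surviving type reproduces the local incidence pattern of a line graph --- the wall-edges meeting at each wall-vertex form a clique --- and is shown to yield the line graph of a chordless $(k\times k)$-wall; certifying that the underlying wall is chordless and that no stray chords reappear is where the careful bookkeeping lies.

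The main obstacle is precisely this dichotomy. I expect the hard part to be proving that every homogeneous chord type other than the two above is impossible: a single bad local pattern, repeated identically across the $\Omega(k^2)$ bricks, should be amplifiable into a grid-like or clique-like minor large enough to contain $K_t$, contradicting $H$-minor-freeness. Making this amplification uniform across all residual patterns --- while simultaneously tracking the chordless condition in the line-graph case and absorbing every size loss (apex deletion, subdivision lengths, the flat-wall reduction, and the homogenization pass) into a single function $f_H$ --- is the technical heart of the argument.
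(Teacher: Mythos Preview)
The paper does not prove this theorem. It is quoted verbatim from \cite{Nicola} (Aboulker, Adler, Kim, Sintiari, Trotignon) and used as a black box in the proof of Theorem~\ref{theorem:14}; there is no argument in the present paper to compare your proposal against.

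That said, your outline is a fair high-level summary of the strategy actually carried out in \cite{Nicola}: start from a large wall via the Excluded Grid Theorem, pass to a flat subwall with a bounded apex set using the Flat Wall machinery of Robertson--Seymour (in its modern quantitative forms), and then analyse chords inside the flat region. Where your sketch is optimistic is in the ``homogenize and dichotomize'' step. In the actual proof, flatness alone does not force chords to be local in the naive sense you suggest; one has to work with the full flat-wall certificate (societies, vortices, near-embeddings) and a carefully engineered notion of ``clean'' subwalls, and the dichotomy between the induced-wall outcome and the line-graph-of-a-chordless-wall outcome is obtained via a delicate case analysis rather than a single Ramsey pass over finitely many brick types. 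In particular, ruling out all other homogeneous patterns and certifying the \emph{chordless} qualifier in the line-graph case are the genuinely hard parts, as you correctly anticipate, and they occupy the bulk of \cite{Nicola}. So your plan points in the right direction, but it should be read as an informal roadmap to that paper rather than as a proof that could be completed in a few pages.
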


\begin{proof}[Proof of Theorem~\ref{thm:minor}]
	Since $\mathcal{H}$ is a proper class of graphs, there is a graph $X$ with smallest number of vertices that is not in $\mathcal{H}$. Since $\mathcal{H}$ is minor-closed, every graph in $\mathcal{H}$ is  $X$-minor-free. By Theorem~\ref{thm:nicol}, there is a function $f_X:\mathbb{N}\rightarrow \mathbb{N}$ such that every $X$-minor-free graph of  tree-width at least $f_X(k)$ contains a $(k\times k)$-wall or the line graph of a chordless $(k\times k)$-wall as an induced subgraph. Now, let $(G,k)$ be an  instance of \textsc{Acyclic Matching} problem or \textsc{Induced Matching} problem, where $G\in \mathcal{H}$ and so is $X$-minor-free. If $\tw(G)\leq f_X(k)$,  the maximum acyclic matching of $G$ (because of Theorem~\ref{theorem:8}) and the  maximum induced matching of $G$ (because of Proposition 23 in \cite{PIM}) can be computed in FPT time with respect to the parameter $k$. So, assume that $\tw(G)\ge f_X(k)$. Theorem~\ref{thm:nicol} implies that $G$ contains a $(k\times k)$-wall or the line graph of a chordless $(k\times k)$-wall as an induced subgraph. Now, we prove that such $G$ has an induced (and so acyclic) matching of size $k$.  First, suppose that $G$ contains a $(k\times k)$-wall $W$ as an induced subgraph. The outer cycle $C$ of $W$ is an induced cycle in $G$ with length at least $6k-8$. 
	So, for $k\geq 3$,  $C$ contains an induced matching of size $k$ consisting of edges in $C$ with distance at least two from each other. Now, suppose that $G$ contains the line graph $W'$ of a chordless $(k \times k)$-wall as an induced subgraph. Again, the outer cycle $C'$ of $W'$ is an induced cycle of length at least $6k-8$, therefore $C'$ also contains an induced matching of size $k$, as desired.  
\end{proof}

In \cite{3}, Moser and Sikdar proved that the  \textsc{Induced Matching} problem admits a kernel  with $O(k\Delta^2)$ vertices (with a more attention at their proof, one may verify that, in fact, the problem admits a kernel with $O(k\Delta^2)$ edges). In other words, for every graph $G$ and integer $k$, either $G$ has an induced matching of size $k$, or $G$ has at most $O(k\Delta^2)$ edges. Since every induced matching is an acyclic matching, this immediately implies that  \textsc{Acyclic Matching} problem is  fixed-parameter tractable with respect to the parameters $k$ and $\Delta$ although it is $W[1]-$hard with respect to each of the two parameters $k$ and $\Delta$.

For the last result of the paper, we prove that the \textsc{Acyclic Matching} problem is fixed-parameter tractable with respect to the parameters $k$ and $c_4$, where $c_4$ is the number of cycles of length four in the input graph. The same result can be proved for the \textsc{Induced Matching} problem which improves a result by Moser et al. \cite{3} which states that the \textsc{Induced Matching} problem is fixed-parameter tractable with respect to the parameter $k$ for the graphs of girth at least $6$.

\begin{theorem}\label{thm:kc4}
	Let $c_{4}(G)$ be the number of cycles of length four in a graph $G$. The \textsc{Acyclic Matching} problem  and the \textsc{Induced Matching} problem are fixed-parameter tractable with respect to the parameters $k$ and $c_{4}$. In fact, it admits a polynomial kernel with $O(k(k^2+c_4)^4)$ edges.\label{theorem:12}
\end{theorem}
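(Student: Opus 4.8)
We describe a polynomial kernelization; since every induced matching is acyclic, it suffices to show that a suitably ``reduced'' graph with enough edges and few $4$-cycles always contains an \emph{induced} matching of size $k$ (which is then also an acyclic matching of size $k$). First I would clean up the instance with two rules that are safe for both $\MAM$ and $\alpha^\prime_{\text{in}}$ and never create a new $C_4$: delete isolated vertices; and if a vertex $v$ has two pendant neighbours $u_1,u_2$, delete $u_2$. The second rule is justified by Proposition~\ref{lem7} (and an analogous easy argument for the maximum induced matching size), because $u_1,u_2$ are false twins with $N(u_1)=N(u_2)=\{v\}$, so $G$ is a blow-up of $G-u_2$. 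After exhaustive application, every vertex is non-isolated and has at most one pendant neighbour, and the number of $4$-cycles has not increased.

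\textbf{Bounding the maximum degree.} The heart of the proof is to show that if the reduced graph $G$ has a vertex $v$ with $\deg(v)>D$ for a suitable $D=O(k(k^2+c_4))$, then $G$ has an induced matching of size $k$. Write $N(v)=P\cup Q\cup I$, where $P$ is the set of pendant neighbours (so $|P|\le 1$), $Q$ the neighbours that are non-isolated in the link $G[N(v)]$, and $I$ the non-pendant neighbours that are isolated in $G[N(v)]$. A key observation is that $G[N(v)]$ has maximum degree $O(\sqrt{c_4})$: for $u\in N(v)$ we have $\deg_{G[N(v)]}(u)=|N(u)\cap N(v)|$, and every pair of vertices of $N(u)\cap N(v)$ spans, together with the adjacent pair $u,v$, a distinct $C_4$ of $G$, whence $\binom{|N(u)\cap N(v)|}{2}\le c_4$. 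Consequently, if $G[N(v)]$ has no induced matching of size $k$, a maximal induced matching $M_0$ of $G[N(v)]$ has $|V(M_0)|\le 2k-2$, every edge of $G[N(v)]$ is incident to $N_{G[N(v)]}[V(M_0)]$, and combining this with the degree bound on the link gives $|E(G[N(v)])|=O(k(c_4+1))$, hence $|Q|=O(k(c_4+1))$. Since $G[N(v)]$ is an induced subgraph of $G$, an induced matching of size $k$ in $G[N(v)]$ is one in $G$; so we may assume $|Q|=O(k(c_4+1))$, and therefore $|I|\ge\deg(v)-O(k(c_4+1))$ is large.

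\textbf{The outward case.} Every $u\in I$ has a neighbour $w(u)\notin N[v]$. I would now discard the vertices of $I$ that create ``conflicts'', charging each to a distinct $C_4$: if $w(u)=w(u')$ for $u\ne u'$ then $u\,v\,u'\,w(u)$ is a $C_4$, and if $u\sim w(u')$ then $v\,u\,w(u')\,u'$ is a $C_4$; only $O(c_4)$ vertices of $I$ can be involved in conflicts of these two kinds. Removing them yields $I'\subseteq I$ with $|I'|\ge\deg(v)-O(k(c_4+1))$, with $w$ injective on $I'$, and such that the only possible adjacency between two of the pairwise disjoint edges $u\,w(u)$ and $u'\,w(u')$ ($u,u'\in I'$) is the single edge $w(u)w(u')$. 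Hence an independent set of size $k$ in the induced subgraph $G[W]$, where $W:=\{w(u):u\in I'\}$, yields an induced matching of size $k$ in $G$. Finally, $G[W]$ has at most $c_4$ four-cycles, so deleting one edge of each makes it $C_4$-free; by the K\H{o}v\'ari--S\'os--Tur\'an bound $|E(G[W])|\le |W|^{3/2}+c_4$, and the greedy bound $\alpha(G[W])\ge |W|^2/(2|E(G[W])|+|W|)$ then exceeds $k$ once $|W|\ge C(k^2+c_4)$ for a universal constant $C$. Choosing $D$ large enough in terms of $k$ and $c_4$ guarantees $|I'|\ge |W|\ge C(k^2+c_4)$, which completes the degree bound.

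\textbf{Assembling the kernel.} Once $\Delta(G)\le D=O(k(k^2+c_4))$, the Moser--Sikdar kernel for \textsc{Induced Matching} \cite{3} shows that $G$ either has an induced matching (hence an acyclic matching) of size $k$, or $|E(G)|=O(k\Delta^2)=O(k(k^2+c_4)^4)$. The kernel is therefore: apply the two reduction rules exhaustively, and if afterwards $|E(G)|$ exceeds this bound, output a trivial yes-instance (an induced matching of size $k$ can in fact be produced in polynomial time by the constructive arguments above); otherwise output the reduced instance, which now has $O(k(k^2+c_4)^4)$ edges. This is correct for both problems simultaneously, and the few small values of $k$ for which the cycle-length and independent-set estimates are vacuous are disposed of by brute force. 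I expect the \emph{main obstacle} to be the degree bound of the two middle steps: realising that the link $G[N(v)]$ automatically has degree $O(\sqrt{c_4})$ (so ``inward'' conflicts are scarce for free), and carrying out the bookkeeping that charges every ``outward'' conflict to its own $C_4$ with bounded multiplicity, so that only $O(c_4)$ vertices must be deleted before the K\H{o}v\'ari--S\'os--Tur\'an and greedy independent-set arguments can be applied.
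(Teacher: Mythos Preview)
Your proof is correct and follows the same global strategy as the paper (pendant reduction $\to$ if the maximum degree is small, invoke the Moser--Sikdar kernel; if some vertex $v$ has large degree, exhibit an induced matching of size $k$), but the high-degree case is organised differently. The paper simply deletes from $N(v)$ the at most $4c_4$ vertices lying on \emph{any} $C_4$ and then applies the Ramsey bound $R(C_4,K_m)=O(m^2)$ to the resulting $C_4$-free subgraph to extract a large independent set $I$; each $u\in I$ then has a private outside neighbour $w_u$, and the same two steps (delete $\le 4c_4$ vertices incident with a $C_4$, apply Ramsey) are repeated on $W=\{w_u\}$ to obtain the final matching. You instead observe that the link $G[N(v)]$ has maximum degree $O(\sqrt{c_4})$, use a maximal-induced-matching argument to bound the number of non-isolated link vertices, take $I$ to be the isolated link vertices (which are already independent), charge the remaining conflicts to $C_4$'s through $v$, and finally replace the Ramsey black box by its ingredients (K\H{o}v\'ari--S\'os--Tur\'an plus the greedy/Tur\'an independence bound) on $G[W]$.

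The two routes are equivalent in strength---the Ramsey bound the paper quotes is itself proved via K\H{o}v\'ari--S\'os--Tur\'an---so nothing essential is gained or lost. Your more explicit bookkeeping actually yields a slightly smaller degree threshold $D=O(k(k^2+c_4))$ than the paper's $n_0=\Theta((k^2+c_4)^2)$, but both comfortably give the stated $O(k(k^2+c_4)^4)$ edge kernel. One minor point: your conflict-charging step is cleanest if you phrase it as ``remove every $u\in I$ that lies on a $C_4$ through $v$'' (at most $2c_4$ such vertices), which simultaneously kills the $w(u)=w(u')$ and the $u\sim w(u')$ conflicts without any multiplicity worries; this is exactly the local analogue of the paper's blanket ``remove all $C_4$-incident vertices''.
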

In order to prove the above theorem we need a result from Ramsey theory. Given graphs $H_1,H_2$, the Ramsey number $R(H_1,H_2)$ is the smallest integer $n$ such that for every graph $G$ on $n$ vertices, either $G$ contains a copy of $H_1$ as a subgraph, or the complement of $G$, $\overline{G}$, contains a copy of $H_2$. We need the following result from \cite{Ramsey1,Ramsey2} regarding the Ramsey number of the cycle $C_4$ versus  the complete graph $K_m$. There is a constant $c$, such that

\begin{equation}\label{ramsey}
R(C_4,K_m)\leq c (m/\log m)^2. 
\end{equation}

\begin{proof}[Proof of Theorem~\ref{thm:kc4}]
	Let $G$ be a graph with maximum degree $\Delta$. Without loss of generality, we may assume that every vertex of $G$ has at most one neighbor of degree one (otherwise, we can remove all leaf neighbors except one and this does not change the size of maximum induced and acyclic matching of $G$). Define $n_0=c(ck^2+4c_4)^2+4c_{4}$, where $c$ is the constant in \eqref{ramsey}. If $\Delta\leq n_0$, then by the argument before Theorem~\ref{thm:kc4}, we have a kernel with $O(k\Delta^2)=O(k(k^2+c_{4})^4)$ edges. Therefore, assume that $G$ has a vertex $v$ of degree at least $n_0+1$. In this case, we prove that $G$ is a yes-instance. Let $U=\{u_1,\cdots,u_{n_0}\}$ be the neighbors of $v$ with degree at least two. It is clear that there are at most $4c_4$ vertices in $U$ which are incident with a $4$-cycle $C_4$. So, there is a set $U'\subseteq U$ such that $ |U'|\geq c(ck^2+4c_4)^2 $ and the vertices of no $4$-cycle in $G$ intersects $U'$. Therefore, by \eqref{ramsey}, there is a stable set $I$ in $G[U']$ with $|I|\geq ck^2+4c_4$. Since every vertex of $I$ has degree at least two and no vertex in $U'$ is incident with a four-cycle in $G$, any $u_i\in I$ has a distinct private neighbor $w_i$. Set $W=\{w_i: u_i\in I\}$. With an argument similar to the one we gave for $U$, we can see that there is a stable set $I'\subseteq W$ with $|I'|\geq k$. Finally, define $M=\{u_j w_j:w_j\in I'\}$. We know that both $I$ and $I'$ are stable sets and no four-cycle in $G$ intersects vertices of $M$. Hence, $M$ is an induced matching of size at least $k$ which is an acyclic matching, as well. 
\end{proof}


\end{document}